 \documentclass{nature}

\usepackage{amssymb}
\usepackage{amsmath,amsthm}
\usepackage{amsthm,thmtools}
\usepackage{amsfonts}
\usepackage{verbatim}

\usepackage{lineno}
\linenumbers

\usepackage{enumerate}

%\iffalse\usepackage{enumerate,paralist}
%\usepackage{amstext}
%\usepackage{psfrag}
%\usepackage{dsfont}
\usepackage[dvips]{epsfig}
\usepackage{color}

\theoremstyle{plain}
\newtheorem{theorem}{Theorem}
\newtheorem{proposition}{Proposition}

\theoremstyle{definition}
\newtheorem{definition}{Definition}

\newcommand{\LMD}{\lambda_0,\dots,\lambda_n}

\declaretheorem[name={Example},qed={\lower-0.3ex\hbox{$\square$}} ] {example}

\newtheorem{problem}{Problem}

\def\R{{\mathbb R}}

\def\W{{\mathbb W}}

\newcommand{\be}{\begin{equation}}
\newcommand{\ee}{\end{equation}}

%\numberwithin{equation}{section}
%\renewcommand{\labelenumi}{\arabic{enumi}.}

%% RFM macros
 
\newcommand{\inter}{\operatorname{int}}
% my comments
  % use this to add visible comments, by simply typing \myc{<text>}
%\newcommand{\myc}[1]{}            % use this to remove the comments from the pdf

% use this to highlight text (for example, when replying to reviewers comments)
%\newcommand{\updt}[1]{{\color{blue}#1}}
\newcommand{\updt}[1]{#1}  % use this to revert back to regular color.

\renewcommand{\vec}[1]{\mathnormal{#1}}

\newcommand{\scrv}{\emph{S.~cerevisiae}}

% to show the figures in Nature document class
\makeatletter
\let\saved@includegraphics\includegraphics
\AtBeginDocument{\let\includegraphics\saved@includegraphics}
\renewenvironment*{figure}{\@float{figure}}{\end@float}
\makeatother

% remove page number
%\pagenumbering{gobble}

% remove line numbers
\nolinenumbers

\begin{document}
%%%%%%%%%%%%%%%%%5
%\frenchspacing

\title {Networks of ribosome flow models for modeling and analyzing intracellular traffic}

\author{Itzik Nanikashvili$^{1}$,Yoram Zarai$^{4}$, Alexander Ovseevich$^{2}$, Tamir Tuller $^{3,4}$, and Michael Margaliot$^{1,3}$}

\maketitle

\begin{affiliations}
 \item   School of Electrical Engineering, Tel-Aviv
University, Tel-Aviv 69978, Israel.
\item   Ishlinsky Institute for Problems in Mechanics,
Russian Academy of Sciences   
and the   Russian Quantum Center, Moscow, Russia.
 %%%
\item   Sagol School of Neuroscience, Tel-Aviv
University, Tel-Aviv 69978, Israel.
\item   Dept. of Biomedical Engineering, Tel-Aviv
University, Tel-Aviv 69978, Israel.
E-mail: tamirtul@post.tau.ac.il  \protect\\
\end{affiliations}

%\author{Itzik Nanikashvili, Yoram Zarai, Alexander Ovseevich, Tamir Tuller and
%Michael~Margaliot
 %\IEEEcompsocitemizethanks{
%\IEEEcompsocthanksitem
%I.  Nanikashvili is with the School of Electrical Engineering, Tel-Aviv University, Tel-Aviv 69978, Israel.
%E-mail: itzhakna@gmail.com
 %\IEEEcompsocthanksitem Y. Zarai is with the Dept. of Biomedical Engineering, Tel-Aviv University, Tel-Aviv 69978, Israel.
%E-mail: yoramzar@mail.tau.ac.il
%\IEEEcompsocthanksitem A. Ovseevich is with the Ishlinsky Institute for Problems in Mechanics,
%Russian Academy of Sciences, pr. Vernadskogo 101, 119526 Moscow, Russia. E-mail: ovseev@ipmnet.ru
%\IEEEcompsocthanksitem T. Tuller (Corresponding Author) is with the Dept. of Biomedical Engineering and the Sagol School of Neuroscience, Tel-Aviv University, Tel-Aviv 69978, Israel.
%E-mail: tamirtul@post.tau.ac.il
%\IEEEcompsocthanksitem
%M. Margaliot   is with the School of Electrical Engineering and the Sagol School of Neuroscience,
%Tel-Aviv University, Tel-Aviv 69978, Israel.
% note need leading \protect in front of \\ to get a newline within \thanks as
% \\ is fragile and will error, could use \hfil\break instead.
%E-mail: michaelm@eng.tau.ac.il  % <-this % stops a space
%}}

%\maketitle
%\begin{center}June  2, 2018\end{center}

\begin{abstract}
%%%%%%%%%%%%%%%%%
The ribosome flow model with input and output~(RFMIO) is a deterministic
dynamical  system that has been used to study    the flow of ribosomes during mRNA translation.
The  input of the~RFMIO controls its initiation rate and the output represents the
 ribosome exit rate (and thus the protein production rate) at the 3' end
of the mRNA molecule.
The~RFMIO and its variants  encapsulate  important properties that are relevant to modeling ribosome flow such as the possible evolution of ``traffic jams'' and non-homogeneous elongation  rates along the mRNA molecule,
and
 can also be used for studying additional intracellular processes such as transcription, transport, and more.

Here we consider networks of interconnected~RFMIOs as a fundamental tool for modeling, analyzing and re-engineering the complex mechanisms of protein production.
In these networks, the output of each RFMIO  may be divided, using connection weights,  between
 several inputs of other~RFMIOs. We show that under quite general feedback connections the network   has two important properties: (1)~it admits a unique steady-state and every
 trajectory converges to this steady-state; and
(2)~the problem of how to determine the connection weights so that
the network steady-state output is maximized is a convex optimization problem. These mathematical properties  make
 these networks highly suitable as models of various phenomena: property~(1) means that the behavior is predictable and ordered,
and property~(2) means that determining the optimal weights is
numerically tractable even for large-scale networks.

For  the specific case of a
 feed-forward network of RFMIOs we prove   an   additional  useful property, namely,
that there exists a     spectral representation  for the network
 steady-state, and thus it can be determined  without any numerical simulations of the dynamics.
We describe the implications of   these  results  to several fundamental biological phenomena and biotechnological objectives.
%%%%%
 \end{abstract}
%%%%%%%%%%%%%%%%%%%%%%%%%%%%%%%%%%%%%%%%%%%%%%%%%%%%%%%%%%%%%%%%%%%%%%%%%%%%

%\begin{IEEEkeywords}
%%
%Systems biology, biological networks,  mRNA translation, ribosome recycling, intracellular transport, synthetic biology, competition for shared resources, spectral, representation, eigenvalue sensitivity, convex optimization, maximizing protein production rate..
%%%
%\end{IEEEkeywords}

%\bigskip
%%%%%%%%%%%%%%%%%%%%%%%%%%%%%%%%%%%%%%%%%%%%
\section*{Introduction}\label{sec:intro}
%%%%%%%%%%%%%%%%%%%%%%%%%%%%%%%%%%%%%%%%%%%%

Gene expression is a complex multistage process in which information encoded in
the DNA is used to generate proteins or other gene products.
Gene expression involves two primary stages:  transcription and translation.
Each of these stages involves the sequential movement of enzymes along the genetic material.
During transcription, RNA copies of the DNA  genes are synthesized by enzymes called RNA~polymerase.
The product is the messenger RNA (mRNA), which codes, by a series of nucleotide triplets (called codons), the order in which amino-acids need to be combined to synthesize the protein.
%In prokaryotic protein-coding genes, the transcription creates mRNA that is ready for translation while in eukaryotic genes the product is a primary transcript of RNA (pre-mRNA), which first has to undergo some processing (e.g. RNA-splicing, etc) to become a mature mRNA.

Translation is the process in which the information in the mRNA is decoded and the protein is synthesized.
During translation, complex macromolecules called ribosomes  bind to the start codon in the mRNA and sequentially decode each codon to its corresponding  amino-acid  that is delivered to the awaiting ribosome by transfer RNA~(tRNA).
The amino-acid peptide is elongated until the ribosome reaches a stop codon,     detaches from the mRNA and the resulting amino-chain peptide is released, folded and becomes a functional protein~\cite{Alberts2002}.
The detached ribosome may re-initiate the same mRNA molecule (ribosome recycling~\cite{Skabkin2013,Steege1977}) or become available to translate other mRNAs.
To increase the translation efficiency, multiple ribosomes may decode the same mRNA molecule simultaneously (polysome)~\cite{Alberts2002}.

mRNA translation is a fundamental process  in all living cells of all organisms.
Thus, a better understanding of its bio-physical  properties  has numerous potential applications in many scientific disciplines including medicine, systems biology, biotechnology  and evolutionary biology.
Mechanistic models of translation are essential for:
 (1)~analyzing the flow of ribosomes along the mRNA molecule; (2)~integrating and understanding  the rapidly increasing experimental findings related to translation and its role
in the  dynamical  regulation of gene expression~(see, e.g., \cite{Dana2011,TullerGB2011,Tuller2007,Chu2012,Deneke2013,Racle2013,Zur2016,Dao2018,Lalanne2018}); and (3)~providing a computational testbed for predicting the effects of various
manipulations of the genetic machinery.
These models describe the dynamics of ribosome flow and include parameters
whose values   represent the various translation factors that affect the initiation rate and codon decoding times along the mRNA molecule.

Another fundamental biological process based on  the flow of biological ``machines''  along ``intracellular roads'' is intracellular transport.
In this process, vesicles are transferred to  particular intracellular locations by  molecular motors that haul them along microtubules and actin filaments~\cite{Alberts2002,Vale2003}.

We now review two computational models for ribosome
 flow that are the most relevant for this paper.
Numerous other models exist in the literature,
 see e.g. the survey papers~\cite{Zur2016,Haar2012}.

\subsection{Totally asymmetric simple exclusion process~(TASEP)}
%%%%%%%%%%%%%%%%%%%%%%%%%%%%%%%%%%%%%%%%%%%%%%%%%%%%%%%%%%%%%%%%%
 This is a  fundamental model in non-equilibrium statistical mechanics that has been extensively used to model and analyze translation and  intracellular transport.
 TASEP is a discrete-time, stochastic model describing particles hopping along an ordered  lattice of~$N$ sites~\cite{tasep0,macdonald1969concerning,TASEP_book}.
A particle at site $i$ may hop to site $i+1$ at a rate $\gamma_i$ but only if this site is empty. This models the fact that the particles have volume and thus cannot overtake one another. Specifically, a particle may hop to the first site at rate $\alpha$ (if the first site is empty), and hop out from the last site at rate~$\beta$ (if the last site is occupied). In the context of translation, the lattice of sites represents the chain of codons in the mRNA, and the hopping particles represent the moving ribosomes~\cite{TASEP_tutorial_2011,Shaw2003}.

Analysis of TASEP is in general non trivial, and closed-form results have been obtained mainly for the homogeneous TASEP, i.e. the case where all $\gamma_i$s are assumed to be equal. The non-homogeneous case is typically studied  via extensive and time-consuming Monte Carlo simulations.

In each cell, multiple translation processes take place concurrently, utilizing the limited shared translation resources (i.e. ribosomes and translation factors). For example, a yeast cell
contains about~$60,000$ mRNA molecules and about~$240,000$ ribosomes~\cite{Haar2008,Zenklusen2008}.
The competition for shared resources induces indirect  interactions and correlations between the various translation processes.
Such interactions must be considered when analyzing the cellular economy of the cell, and also when designing synthetic circuits~\cite{arkin_context_2012,DELVECCHIO2018}.

Analyzing large-scale translation, as opposed to translation of a single isolated mRNA molecule, is thus an important research direction that is recently attracting considerable  research attention~\cite{Raveh2016,mixed_tasep,brackley2011dynamics,gyorgy2015isocost,mather2013translational,ZaraiTuller2018}. For example, in~\cite{mather2013translational}, a TASEP-based computational network consisting of $400$ mRNA species and $14,000$ ribosomes has been
 used to analyze  the sensitivity of a translation network to perturbations in the initiation and elongation rates and in the mRNA levels. In \cite{ZaraiTuller2018}, a deterministic mean-field approximation of  TASEP, called  the \emph{ribosome flow model}~(RFM), was used for studying the effect of fluctuations in
the mRNA levels  on translation in a whole cell simulation of  an \updt{{\scrv}} cell.

Here, we consider   large-scale networks of interconnected
translation processes, whose building blocks are~RFMs with suitable inputs and outputs.

%%%%%%%%%%%%%%%%%%%%%%%%%%%%%%%%%%%%%%%%%%%%
\subsection{Ribosome flow model (RFM)}
%%%%%%%%%%%%%%%%%%%%%%%%%%%%%%%%%%%%%%%%%%%%

The RFM~\cite{rfm0} is a continuous-time, deterministic model for ribosome flow  that can be obtained via a mean-field approximation of TASEP with open boundary conditions (i.e., the two sides of the TASEP lattice are connected to two particle reservoirs)~\cite{Zarai20170128}.

In the RFM, the mRNA molecule is coarse-grained into a chain of $n$ consecutive sites of codons. For each site  $i\in \{1,2,\dots,n\}$, a state-variable $x_i(t)\in [0,1]$ denotes the normalized ribosomal occupancy level (or ribosomal density) at site $i$ in time $t$, where $x_i(t)=1$ $[x_i(t)=0]$ means that site~$i$ is completely full [empty] at time~$t$.
The RFM is characterized by $n+1$ positive parameters: the initiation rate ($\lambda_0$), the transition rate from site $i$ to site $i+1$ ($\lambda_i$), and the exit rate ($\lambda_n$). Ribosomes that attempt to bind to the first site are constrained by the occupancy level of that site, i.e. the effective flow of ribosomes into the first site is given by~$\lambda_0(1-x_1)$.
This means that: (1) the maximal possible     entry rate is~$\lambda_0$; and
(2) the entry rate decreases as the first site becomes fuller, and becomes zero when the first site is completely full. Similarly, the effective flow of ribosomes from site~$i$ to site~$i+1$ increases [decreases] with the occupancy level at site~$i$ [$i+1$] and thus is given by~$\lambda_i x_i (1-x_{i+1})$. This is a ``soft'' version of the~\emph{simple exclusion} principle that models the fact that ribosomes have volume and cannot overtake one another, thus as the occupancy level at site~$i+1$ increases less ribosomes can enter this site, and the effective flow rate from site~$i$ to site~$i+1$ decreases.
The (soft) simple exclusion principle in the~RFM allows to model the evolution of ribosomal ``traffic jams''. Indeed, if site~$i$ becomes fuller, i.e.~$x_i$ increases then the flow from site~$i-1$ to site~$i$ decreases and thus site~$i-1$ also becomes fuller, and so on.
 Recent findings suggest that in
 many organisms and conditions a non-negligible percentage of
 ribosomes tends to be involved in such traffic jams (see, for example,~\cite{Diament2018}).

The dynamics of the RFM with $n$ sites are given by $n$ nonlinear first-order ordinary differential equations~(ODEs) describing the change in the occupancy level of each site as a function of time:
\begin{align}\label{eq:rfm}
%%%
                    \dot{x}_1 (t)&=\lambda_0 (1-x_1(t)) -\lambda_1 x_1(t)(1-x_2(t)), \nonumber \\
                    \dot{x}_2(t)&=\lambda_{1} x_{1}(t) (1-x_{2}(t)) -\lambda_{2} x_{2}(t) (1-x_3(t)) , \nonumber \\
                    \dot{x}_3 (t)&=\lambda_{2} x_{ 2}(t) (1-x_{3}(t)) -\lambda_{3} x_{3}(t) (1-x_4(t)) , \nonumber \\
                             &\vdots \nonumber \\
                    \dot{x}_{n-1}(t)&=\lambda_{n-2} x_{n-2}(t) (1-x_{n-1}(t)) -\lambda_{n-1} x_{n-1}(t) (1-x_n(t)),
										\nonumber \\
                    \dot{x}_n(t)&=\lambda_{n-1}x_{n-1}(t) (1-x_n(t)) -\lambda_n x_n(t).
\end{align}
 Note that the~$x_i$s are dimensionless,
and the~$\lambda_i$s have units of~$1/\text{time}$.
The protein~\emph{production rate} or~\emph{translation rate}  is the rate in which ribosomes detach from
 the mRNA at time $t$, that is, $r(t):=\lambda_n x_n(t)$.

If we let $x_0(t):=1$ and $x_{n+1}(t):=0$, then~\eqref{eq:rfm} can be written more succinctly as
\be\label{eq:rfm_all}
\dot{x}_i=h_{i-1}(x)-h_i(x),\quad i=1,\dots,n,
\ee
where~$h_i(x):=\lambda_i x_i(1-x_{i+1})$,
and  we omit the dependence in time for clarity.
 This means that the change in the density at site~$i$ is the flow from site~$i-1$ to site~$i$ minus the flow from site~$i$
to site~$i+1$.

Let $\vec{x}(t,\vec{a})$ denote the solution of~\eqref{eq:rfm} at time $t\ge 0$
 for the initial condition $\vec x(0)=\vec a$.
Since the state variables correspond to normalized occupancy levels, we always assume that $\vec a$ belongs to the closed $n$-dimensional unit  cube denoted~$[0,1]^n$. Let~$(0,1)^n$ denote the interior of~$[0,1]^n$. In other words, $a\in(0,1)^n$ means that every entry~$a_i$  of~$a$
satisfies~$0<a_i<1$.

It was shown in~\cite{RFM_stability} that~$[0,1]^n$ is an invariant set of the dynamics i.e. if~$\vec a\in [0,1]^n$ then~$\vec x(t,a) \in[0,1]^n$ for all~$t\geq0$.
It was also shown that the RFM is a~\emph{tridiagonal cooperative dynamical system}~\cite{hlsmith,fulltppaper},
and that this implies that~\eqref{eq:rfm} admits a  steady-state
  point~$\vec e=\vec e(\LMD) \in (0,1)^n$, that is globally asymptotically stable, that is,
	\[
	\lim_{t\to\infty} \vec x(t,a)=\vec e, 
	\text{ for any } \vec a\in [0,1]^n 
	\]
	 (see also~\cite{RFM_entrain}). In particular, the production rate converges to the steady-state value
	\[
	r_{ss}:=\lim_{t\to\infty} r(t) =\lambda_n e_n.
\]
This means that the parameters of the RFM determine a unique
steady-state occupancy at all sites along the mRNA. At the steady-state the flow into every site is equal to the flow out of the site.
For any initial density the dynamics converge to this steady-state.

\subsubsection*{Spectral representation of the RFM steady-state}
%%%%%%%%%%%%%%%%%%%%%%%%%%%%%%%%%%%%%%%%%%%%%%%
Simulating the dynamics of  large-scale RFMs until (numerical)
convergence to the steady-state may be tedious.
A useful property of the RFM is that the steady-state can be computed using a spectral approach, that is, based on calculating the eigenvalues and eigenvectors of a suitable matrix~\cite{rfm_max}.
Consider the RFM with dimension $n$ and rates $\LMD$. Define the $(n+2)\times(n+2)$ Jacobi  matrix
\be\label{eq:bmatrox}
                A(\lambda_0,\dots,\lambda_n):= \begin{bmatrix}
             %     J:= \left[ \begin{smallmatrix}
%%
 0 &  \lambda_0^{-1/2}   & 0 &0 & \dots &0&0 \\
\lambda_0^{-1/2} & 0  & \lambda_1^{-1/2}   & 0  & \dots &0&0 \\
 0& \lambda_1^{-1/2} & 0 &  \lambda_2^{-1/2}    & \dots &0&0 \\
%%%
 & &&\vdots \\
%%%
 0& 0 & 0 & \dots &\lambda_{n-1}^{-1/2}  & 0& \lambda_{n }^{-1/2}     \\
%%%
 0& 0 & 0 & \dots &0 & \lambda_{n }^{-1/2}  & 0
%%%
 \end{bmatrix}.
 % \end{smallmatrix}\right].
\ee
This is a   symmetric matrix, so
   its eigenvalues are real. Since~$A$ is componentwise non-negative  and irreducible,  it admits a unique maximal eigenvalue~$\sigma>0$ (called the Perron eigenvalue or Perron root),
 and the corresponding eigenvector~$\zeta\in\R^{n+2}$  (the Perron eigenvector)
has positive entries~\cite{matrx_ana}.

\begin{theorem}\label{thm:spect} \cite{rfm_max}
%%%%%%%%%%%%%%%%%%%%%%%%%%%%%%%%%%%%%%%%%%%%%%%%%%%%
Consider an RFM with dimension $n$ and rates $\LMD$.
Let~$A$ be the matrix defined in~\eqref{eq:bmatrox}. Then the steady-state values of the~RFM satisfy:
\begin{align}\label{eq:spect_rep}
r _{ss}=\sigma^{-2}  \text{ and }
e_i =\lambda_i^{-1/2}\sigma^{-1}\frac{\zeta_{i+2}}{\zeta_{i+1}}, \quad i=1,\dots,n.
\end{align}
\end{theorem}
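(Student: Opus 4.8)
The plan is to produce a point $\vec e\in(0,1)^n$ that solves the RFM steady-state equations and is built directly from the Perron pair $(\sigma,\zeta)$ of $A$; since the results quoted in the excerpt guarantee that~\eqref{eq:rfm} has a \emph{unique} steady-state in $(0,1)^n$, this candidate must coincide with the true steady-state, and reading off its coordinates yields~\eqref{eq:spect_rep}. First I would record the steady-state equations: setting $\dot x_i=0$ in~\eqref{eq:rfm_all} forces $h_{i-1}(\vec e)=h_i(\vec e)$ for every $i$, so all fluxes equal a common value $R$. With the conventions $x_0=1$, $x_{n+1}=0$ this reads $\lambda_0(1-e_1)=\lambda_i e_i(1-e_{i+1})=\lambda_n e_n=R$ for $i=1,\dots,n-1$, where $R=r_{ss}$. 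It therefore suffices to solve this system inside $(0,1)^n$ with $R=\sigma^{-2}$.

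The engine of the argument is the eigenvalue recursion $A\zeta=\sigma\zeta$, read row by row with $\zeta$ indexed $1,\dots,n+2$. The top and bottom rows give $\lambda_0^{-1/2}\zeta_2=\sigma\zeta_1$ and $\lambda_n^{-1/2}\zeta_{n+1}=\sigma\zeta_{n+2}$, while each interior row $k=i+2$ gives $\lambda_i^{-1/2}\zeta_{i+1}+\lambda_{i+1}^{-1/2}\zeta_{i+3}=\sigma\zeta_{i+2}$. Defining the candidate $e_i:=\lambda_i^{-1/2}\sigma^{-1}\zeta_{i+2}/\zeta_{i+1}$, the decisive step is to divide this interior relation through by $\sigma\zeta_{i+2}$: the second summand becomes exactly $e_{i+1}$, so that $1-e_{i+1}=\lambda_i^{-1/2}\sigma^{-1}\zeta_{i+1}/\zeta_{i+2}$. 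Multiplying $\lambda_i e_i$ by this expression, the eigenvector ratios cancel and the powers of $\lambda_i$ combine to give $\lambda_i e_i(1-e_{i+1})=\sigma^{-2}$, which is the interior flux balance with $R=\sigma^{-2}$.

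The two boundary rows dispose of the extreme equations. The $i=0$ form of the identity gives $1-e_1=\lambda_0^{-1/2}\sigma^{-1}\zeta_1/\zeta_2$, and substituting $\zeta_1/\zeta_2=\lambda_0^{-1/2}/\sigma$ from the top row yields $\lambda_0(1-e_1)=\sigma^{-2}$; likewise the bottom row gives $\zeta_{n+2}/\zeta_{n+1}=\lambda_n^{-1/2}/\sigma$, whence $\lambda_n e_n=\sigma^{-2}$. For admissibility, $\zeta$ has strictly positive entries and $\sigma>0$ because $(\sigma,\zeta)$ is the Perron pair, so every $e_i>0$; and the ratio identities just derived show each $1-e_i>0$, hence $\vec e\in(0,1)^n$. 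This is precisely where the Perron choice is indispensable: any other eigenvector changes sign and would violate $e_i\in(0,1)$. Invoking the uniqueness of the RFM steady-state then identifies $\vec e$ with the true steady-state and $r_{ss}=R=\sigma^{-2}$, establishing~\eqref{eq:spect_rep}.

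Since the whole argument is a verification, I expect no genuine obstacle once the substitution is in hand; the only real content is recognizing that the interior recursion, divided by $\sigma\zeta_{i+2}$, reproduces the partition $e_{i+1}+(1-e_{i+1})=1$ and hence the RFM flux balance. If one preferred a self-contained proof that avoids the cited uniqueness, the same relations run in reverse—adding the expressions for $e_i$ and $1-e_i$ recovers the recursion line by line—exhibiting a bijection between steady-states and positive eigenvectors; but leaning on the already-quoted uniqueness is the shortest route.
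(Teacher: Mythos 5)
Your verification is correct. Note, however, that the paper does not prove Theorem~\ref{thm:spect} at all: it is imported verbatim from the cited reference \cite{rfm_max}, and the Appendix contains no argument for it, so there is no in-paper proof to compare against. Your route --- construct $e_i:=\lambda_i^{-1/2}\sigma^{-1}\zeta_{i+2}/\zeta_{i+1}$ from the Perron pair, check that dividing the three-term recurrence $\lambda_i^{-1/2}\zeta_{i+1}+\lambda_{i+1}^{-1/2}\zeta_{i+3}=\sigma\zeta_{i+2}$ by $\sigma\zeta_{i+2}$ yields $1-e_{i+1}=\lambda_i^{-1/2}\sigma^{-1}\zeta_{i+1}/\zeta_{i+2}$ and hence the common flux $\sigma^{-2}$, handle the two boundary rows, confirm $e\in(0,1)^n$ from positivity of $\zeta$, and invoke the uniqueness of the RFM steady-state --- is sound, and every algebraic step checks out. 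It is essentially the reverse of the derivation in \cite{rfm_max}, which starts from the steady-state flux equations, shows that suitably defined ratios satisfy the Jacobi recurrence, and then uses Perron--Frobenius (a positive eigenvector belongs to the Perron root) to identify $R^{-1/2}$ with $\sigma$; your direction trades that identification for the already-quoted uniqueness of the steady-state in $(0,1)^n$, which is the shorter path given what the paper supplies. The only cosmetic caveat is that your closing remark about running the relations in reverse to get a bijection would, if made rigorous, still need the Perron--Frobenius step to pin down which eigenvalue arises; as written it is an aside and does not affect the correctness of the main argument.
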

%%%%
In other words,   the steady-state density and production rate in the RFM can be
obtained from the Perron eigenvalue and eigenvector  of~$A$. In particular, this makes it possible to
determine~$r_{ss}$ and~$e$ even for very large chains
using     efficient and numerically stable algorithms for computing
 the eigenvalues and eigenvectors of a Jacobi matrix (see, e.g.,~\cite{fernando_97}).

Consider two RFMs: one with rates~$\lambda_0,\dots,\lambda_n$
and the second with    rates~$\tilde \lambda_0,\dots,\tilde   \lambda_n$ such that~$\tilde \lambda_i=\lambda_i$ for all~$i$ except for one index~$k$ for which~$\tilde \lambda_k>\lambda_k$.  Then~$\tilde \lambda_k^{-1/2} <\lambda_k^{-1/2} $.
Let~$\sigma$ [$\tilde \sigma$] denote the Perron root of the matrix~$A:=A(\lambda_0,\dots,\lambda_n)$ [$\tilde A:=A(\tilde \lambda_0,\dots,\tilde \lambda_n)$].
Comparing  the entries of~$A$ and~$\tilde A$, it follows from
known results in
the  Perron-Frobenius theory that~$\tilde \sigma< \sigma$.
Hence,~$\tilde \sigma^{-2} >   \sigma^{-2}$,
so Thm.~\ref{thm:spect} implies that
the steady-state production rate in the~RFM
increases when one \updt{(or more)} of the rates increases.

Thm.~\ref{thm:spect} has several more important implications. For example, it implies
that~$r_{ss}=r_{ss}(\LMD)$ is a \emph{strictly concave function} on $\R^{n+1}_{++}$~\cite{rfm_max}.
Also, it implies that the sensitivity of  the steady-state with respect to~(w.r.t.)
 a perturbation in the translation
rates
becomes an eigenvalue sensitivity problem~\cite{RFM_sense}. We refer to the
survey~\cite{rfm_chap} for more details.

Ref.~\cite{mt} extended the RFM into a single-input single-output (SISO) control system, by defining the production  rate as an output, and by introducing a time-varying input   $u:\R_+ \to \R_+$ representing the flow of ribosomes from the ``outside world''
 into the mRNA molecule. This is referred to as the~\emph{RFM with input and output}~(RFMIO).  The RFMIO dynamics is  thus described by:
\begin{align*}
%%%
\dot{x}_1&=u\lambda_0(1-x_1)-\lambda_1x_1(1-x_2), \\
\dot{x}_2&=\lambda_1x_1(1-x_2)-\lambda_2x_2(1-x_3), \\
&\vdots \\
\dot{x}_n&=\lambda_{n-1}x_{n-1}(1-x_n)-\lambda_nx_n,\\
y&= \lambda_nx_n.
\end{align*}
%%%%%%
Note that~$u$ multiplies~$\lambda_0$, so that the initiation rate at time~$t$ is~$u(t)\lambda_0$.
We consider~$\lambda_0$ as modeling an intrinsic bio-physical property of the mRNA,  and~$u(t)$
as an ``outside'' effect e.g. the time-varying abundance
 of ``free'' ribosomes in the vicinity of the~mRNA
(see Fig.~\ref{fig:rfm_strand_n}).
 Throughout, 
we always assume that~$u(t)>0$ for all~$t\geq 0$ in order to avoid some technical problems arising when the initiation rate is zero.
Of course, for $u(t)\equiv c$, with~$c>0$,
the~RFMIO  becomes an~RFM  with initiation rate~$c\lambda_0$.
In this case, the convergence to steady-state represents a form of homeostasis that is sensitive to the value of the input.
In particular, the steady-state density of such an RFMIO can be computed using the spectral approach described
in Thm.~\ref{thm:spect}.

\begin{figure}
	\centering
	%\scalebox{0.8}{\input{fig_rfm_strand_n.eps_tex}}
	 \includegraphics[scale=0.7]{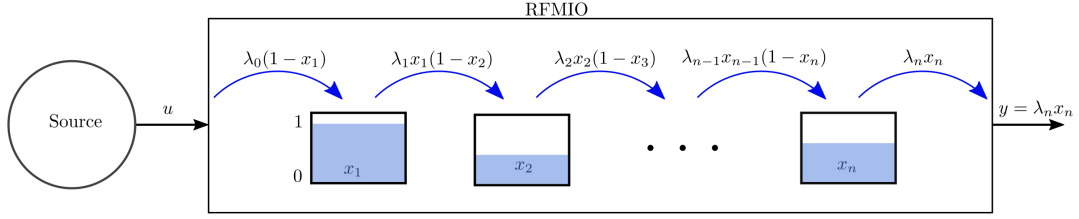}
	\caption{An RFMIO of length $n$,
	output~$y$, and input~$u $   from  an external source. \label{fig:rfm_strand_n}}
\end{figure}

We write the  RFMIO dynamics more succinctly  as
%%%%%%%%
\begin{align}\label{eq:rfmio}
\dot{\vec x}&=\vec f(\vec x,u), \nonumber \\
y&= \lambda_n x_n.
\end{align}
  Let $x(t,a,u)$ denote the solution of the RFMIO at time~$t$ given the initial condition $x(0)=a$ and input~$u$.

The RFMIO   facilitates modeling a network of interconnected ``roads'' (e.g.  mRNA or DNA molecules, microtubules, etc.), where the flow out  of one intracellular road (representing   ribosomes, RNAPs, vesicles attached to molecular motors, etc.) may enter another  road  in the network or re-enter the same  road. This enables the analysis of important phenomena such as translation re-initiation \cite{Kochetov2008}, competition for finite resources  including the   effect of the exit rate from one road
on  the initiation rate of other intracellular  roads (see e.g. \cite{Francesca2015}), transport on a network of interconnected microtubules \cite{Vale2003}, etc.
Such interconnected models are essential for engineering cells for various biotechnological objectives such as optimization of protein production rate, optimization of growth rate, and optimization of traffic jams, as the multiple processes
taking place in the cell  cannot be analyzed using models of a single ``road" \cite{Glick1995}.

In the context of translation,
the output of an RFMIO represents both the flow
of ribosomes out of the   mRNA molecule and the
 synthesis rate of proteins. If the output of a RFMIO is divided into several inputs of other  RFMIOs then this represents the distribution of the exiting  ribosomes   initiating \updt{the} other mRNAs.

%We believe that such networks will play an important role in modeling and analyzing mRNA translation in the cell (rather than in a single, isolated mRNA molecule).

In this paper, we consider networks of interconnected RFMIOs.
We show that such dynamical networks
provide a useful and versatile modeling tool for
many  dynamical intracellular traffic phenomena.
Our first
main result shows that under quite general feedback connections the network
admits a unique steady-state and every solution of the dynamics converses to this steady-state. 
In other words, the network is
 \emph{globally asymptotically  stable}~(GAS).
This is important for several reasons. For example,
 GAS implies that the network admits an ordered behavior and
   paves the way for analyzing further important questions e.g. how does the steady-state \updt{depends} on various parameters?

We  analyze the problem of maximizing the steady-state output of the network.  Specifically, the question we consider  is to determine the interconnection weights values between the RFMIOs in the network so that the network output is maximized. Our second main result
 shows that this is a convex optimization problem, implying that  it can be solved using highly efficient algorithms even for very large networks.

In the specific case of feed-forward networks of RFMIOs, we
show an additional property, namely,
 that we can determine the steady-state of the entire network using a spectral approach, and with no need to numerically solve the dynamical equations.

We note that two previous papers considered specific networks of RFMs. Ref~\cite{mt} studied the effect of ribosome recycling
using a single~RFMIO with positive feedback from the
output (i.e. production rate)
to the input (i.e. initiation rate). Ref.~\cite{Raveh2016} analyzed a closed system composed of a dynamic free pool of  ribosomes
that feeds a single-layer of parallel~RFMIOs. This was used as a tool for analyzing the indirect effect between different mRNA molecules due to
competition for a shared resource, namely, the pool of free ribosomes.
%%%In the current study we analyze a model which is a generalization of the two previous models.
Here, we study  networks that provide a significant generalization of these particular models.

The remainder of this paper is organized as follows. The next section describes the network of RFMIOs that we introduce and analyze in this paper. \updt{Then we present our main results and demonstrate them using a biological example}. The final section summarizes and describes several directions for future research. To increase the readability of this paper, all the proofs are placed in the Appendix.

%Section~\ref{sec:Analysis-of-ff-networks}

We use standard notation. Vectors [matrices] are denoted by small [capital] letters.
$\R^n$ is the set of vectors with~$n$ real coordinates. $\R^n_+$ [$\R^n_{++}$] is the the set of vectors with~$n$ real and nonnegative [positive] coordinates.
For a (column) vector $\vec x\in\R^n$, $x_i$ is the $i$-th entry of $\vec x$, and~$\vec x'$ is the transpose
 of~$\vec x$.
If a time-dependent variable~$x(t)$ admits a steady-state then we
denote it by~$x_{ss}$,
 that is,~$x_{ss}:=\lim_{t\to\infty}x(t)$.

%%%%%%%%%%%%%%%%%%%%%%%%%%%%%%%%%%%%%%%%%%%%
\section*{Networks of RFMIOs} \label{sec:net_rfmio}
%%%%%%%%%%%%%%%%%%%%%%%%%%%%%%%%%%%%%%%%%%%%
Consider a  network of $m$ interconnected RFMIOs.
The \emph{input} to the network is   a source whose  output rate is~$y_0$, and
 represents external resources that drive the elements in the network.
For example, this can represent  pools of free ribosomes in the cell.
%The source  feeds some (at least one) of the RFMIOs in the network.
The \emph{output} of the entire network is denoted by  $y$.
This may represent for example the flow of a desired  protein produced by the network, the total flow of ribosomes that
feed some other process, etc.

%Each RFMIO is a node in the network and the edges represent the weighted interconnections between the RFMIOs.
%A proportion of the  output~$y_\alpha$ of node~$\alpha$ may be connected to the input~$u_\beta$ of
%node~$\beta$  with a  proportion parameter (or edge weight)
% $v_{\alpha\beta} \in [0,1]$. This  means that~$u_\beta$ includes the term~$v_{\alpha\beta}  y_\alpha$.
%Thus, the total input of  node~$\beta$ is $u_\beta= \sum_\gamma v_{\gamma\beta} y_\gamma $, where the
%summation is over all nodes~$\gamma$ such that (part of) the output~$y_\gamma$ is connected to~$\beta$.

For $i\in \{1,...,m\}$, RFMIO $i$ is a dynamical  system  with dimension~$n^i$,
input~$u^i$ and output~$y^i$.
% The portion of the source output  connected to $u_j$ is denoted by $v_{0,j}$, this means that~$u^j$ includes the term~$v_{0,j}  y^0$ .
For any~$k\in \{0,1,...,m\}$ some or all of the output~$y^k$ may be connected to the input of another~RFMIO, say, RFMIO~$j$
 with a control  parameter (or  weight)~$v_{k,j} \in [0,1]$. Here~$v_{k,j}=0$ means that $y^k$ is not connected to $u^j$. The input to RFMIO~$j$
is thus~$u^j=\sum_{k=0}^{m} v_{k,j}  y^k$.

We define the total   network output $y=y^{m+1}$   by
\[
y(t):=\sum_{j=1}^{m} v_{j,m+1} y^j(t),
\]
where  $v_{j,m+1}\in [0,1]$ are the proportion weights from the output $y^j$
of RFMIO~$j$  to the network output~$y$.

We say that the network is \emph{feasible} if:
 (1)~every~$v_{k,j} \in [0,1]$; and (2)~$\sum_{j=1}^{m+1} v_{k,j}=1$.
The first requirement corresponds to the fact that every~$v_{k,j}$ describes the \emph{proportion} of the output~$y_k$ that feeds the input of RFMIO~$j$.
The second requirement means that~$ \sum_{j} v_{k,j} y_k=y_	k$ i.e.
  the connections indeed describe a distribution
	of the output~$y_k$ to other points in the network.

%%%
%We say that the network is \emph{feasible} if (1)~every~$v_{\alpha\beta} \in [0,1]$;
%and (2)~$\sum_{\beta} v_{\alpha\beta}=1$. The first requirement corresponds to
% the fact that every~$v_{\alpha\beta}$ describes the \emph{proportion}
%of the output~$y_\alpha$ that feeds the input of RFMIO~$\beta$. The second requirement
%means that~$ \sum_{\beta} v_{\alpha\beta} y_\alpha=y_\alpha$ i.e.
%when we sum up all the divisions of the output we obtain the total output.

%The network may also include fixed sources
%connected to some RFMIO inputs. These
%represent external resources that   feed some of the
% RFMIOs in the network. For example, this can represent
%  free pools of ribosomes in the cell.

  We use  $x_i^j(t)$, $i=1,\dots,n^j$,  to
denote the state-variable describing the occupancy at site~$i$ in RFMIO~$j$ at time~$t$.
The vector
\[
z(t):=\begin{bmatrix}
 x_1^1(t),\cdots, x_{n^1}^1(t), x_1^2(t),\cdots, x_{n_2}^2(t),\cdots,
x_1^m(t), \cdots,x_{n^m}^m(t) \end{bmatrix}' \in [0,1]^\ell
\]
aggregates   all the state-variables in the network, where $\ell:=\sum_{j=1}^m n^j$.
The  variables
\[
v:=\{v_{k,j}\}, \quad k\in \{0,1,\dots,m\},  \quad j\in \{1,\dots,m+1\}
\]
 describe the connections between the RFMIOs in the network.
%The  \emph{output} of the network at time $t$ is
%\[
%y(t):=\sum_{\gamma\in \mathcal O} w_\gamma y_\gamma,
%\]
%where~$w_\gamma $ are positive weights, and~$\mathcal O \subset \{1,\dots,m\}$.

We demonstrate using several examples how networks of RFMIOs, with and without feedback connections,
 can be used to model and study
various intracellular networks.  As we will see below, our main theoretical result
guarantees that all these networks are~GAS. Thus, the state-variables in the networks converge to a steady-state that depends on the various parameters, but not on the initial condition.

Our first example describes
the efficiency of ribosome recycling in eukaryotic mRNA, and the tradeoff
between recycling on the one-hand and the need to ``free'' ribosomes for other mRNAs on the other-hand.

%%%%%%%%%
 \begin{example} \label{ex:feedback}
%%%%
	Consider the system depicted in Fig.~\ref{fig:ex_feedback}.
	Here a fixed  source with rate~$0.1$
	is feeding an~RFMIO  of length $n=3$ with rates $\lambda_0=\dots=\lambda_3=2$. A proportion  $v\in[0,1]$
	of the RFMIO output~$y(t)$ is fed back into the input,
	so that the total RFMIO input is~$u(t)=0.1+vy(t)$.
	Fig.~\ref{fig:ex_feedback_res}
	shows the steady-state
	values~$y_{ss}(v)$ and~$(1-v)y_{ss}(v)$ as a function of $v$.
	Note that for
	both these   functions there is a unique maximizing  value of~$v$.
	%%%%%%
We may interpret~$y$ as the total production rate, and~$(1-v)y$
	as the rate of ribosomes that are \emph{not} recycled,  and thus can be used to translate other mRNA molecules. Of course, one can also define other
	functions as the network output, say, some    weighted sum of the production rate and the
	rate of non-recycled ribosomes. In this case, finding the
	value~$v$ that maximizes the steady-state  output
	corresponds to maximizing the
	production rate on a specific mRNA molecule
	while still ``freeing'' enough 
	ribosomes for other purposes.
	%%%%
\end{example}

\begin{figure}
	\centering
	%\scalebox{0.8}{\input{fig_ex_feedback.eps_tex}}
	 \includegraphics[scale=0.9]{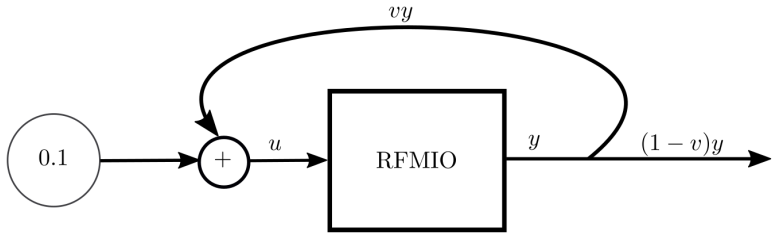}
	\caption{The network in Example~\ref{ex:feedback}. The RFMIO contains $n=3$ sites with rates~$\lambda_0=\dots=\lambda_3=2$.}
	\label{fig:ex_feedback}
\end{figure}

\begin{figure}
 \begin{center}
%%%%%%%%%%%%%%%%%%%%%%%[width=12cm,height=12cm]
  \includegraphics[scale=0.6]{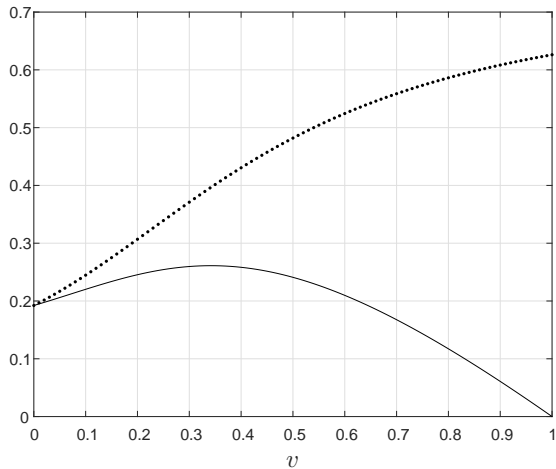}
	\caption{Steady-state values~$y_{ss}(v)$ (dotted line) and~$(1-v)y_{ss}(v)$ (solid) for the system  in
	 Example~\ref{ex:feedback}    as a function of $v$.}
	\label{fig:ex_feedback_res}
%%%%%%%%%%%%%%%%%%%%%%%%%%
\end{center}
\end{figure}

Our first result applies to quite general networks of interconnected~RFMIOs.
%%%%%%%%%%%
\begin{theorem}\label{thm:ss}
%%%%%%%%%%%%%%%%%%%%%%
A feasible   network of $m$ RFMIOs admits a
globally asymptotically stable steady-state point~$e\in (0,1)^\ell$, i.e.
\[
\lim_{t\to\infty}z(t,a)=e, \text{ for all } a\in [0,1]^\ell.
\]
\end{theorem}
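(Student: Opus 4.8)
The plan is to treat the aggregated network as a single cooperative (monotone) dynamical system $\dot z = F(z)$ on the cube $[0,1]^\ell$, show it has a unique equilibrium, and then force global convergence by a sub/super-solution sandwich that, crucially, never uses the tridiagonal structure of an individual RFM and so survives the inter-module coupling. First I would verify that $F$ is \emph{cooperative}. Within each module the Jacobian is tridiagonal with non-negative off-diagonal entries, exactly as in the RFM; the only additional coupling enters the first-site equation $\dot x_1^j = u^j\lambda_0^j(1-x_1^j)-\lambda_1^j x_1^j(1-x_2^j)$ through $u^j=\sum_{k=0}^m v_{k,j}y^k$ with $y^k=\lambda_{n^k}^k x_{n^k}^k$, and it contributes $\partial\dot x_1^j/\partial x_{n^k}^k = v_{k,j}\lambda_{n^k}^k\lambda_0^j(1-x_1^j)\ge 0$. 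Hence $DF$ is Metzler throughout the cube, so the flow is order-preserving: $a\le b$ implies $z(t,a)\le z(t,b)$. Forward invariance of $[0,1]^\ell$ follows by inspecting $F$ on each face, since on $x_1^j=0$ we get $\dot x_1^j=u^j\lambda_0^j\ge 0$ (as $u^j$ is a non-negative combination of outputs) and on $x_{n^j}^j=1$ we get $\dot x_{n^j}^j=-\lambda_{n^j}^j<0$, the interior faces being handled as in the RFM; thus solutions are global and stay in the cube.

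Next I would reduce the equilibrium set to a low-dimensional fixed-point problem. At a steady state each module obeys the RFM balance equations driven by effective initiation rate $u^j\lambda_0^j$, so by the spectral description of the RFM steady state (Thm.~\ref{thm:spect}) its internal state is \emph{uniquely} fixed by $u^j$ and its output is $y^j=R^j(u^j)$, where $R^j$ is the RFM steady-state production rate viewed as a function of its input. The properties recorded after Thm.~\ref{thm:spect} give that each $R^j$ is continuous, strictly increasing, strictly concave, and $R^j(0)=0$. Writing $u^j=v_{0,j}y_0+\sum_{k=1}^m v_{k,j}y^k$, the network equilibria are in bijection with fixed points of $G(y)=\big(R^j(u^j(y))\big)_{j=1}^m$ on a compact box, and since $G$ is the composition of the increasing concave $R^j$ with the non-negative affine maps $y\mapsto u^j(y)$, the map $G$ is monotone and concave componentwise.

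The crux, and what I expect to be the main obstacle, is \emph{uniqueness} of the fixed point of $G$. Because each $R^j$ is strictly concave with $R^j(0)=0$, and the positive offsets $v_{0,j}y_0$ only reinforce the effect, $G$ is strictly sublinear, i.e. $G(\theta y)>\theta G(y)$ for $\theta\in(0,1)$ and $y>0$; a monotone, strictly sublinear self-map of a cone has at most one positive fixed point. This is exactly where the nonlinear Perron--Frobenius / concave-operator machinery must be invoked with care, and where reducing to the finite-dimensional output variables pays off. Assuming the source reaches every module — so that $u^j>0$ at equilibrium — the interior RFM steady states place the unique equilibrium $e$ in $(0,1)^\ell$; I would flag this reachability as the hypothesis that genuinely rules out a drained module sitting on the boundary.

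Finally I would obtain GAS by squeezing. Since $F(\mathbf 0)\ge 0$ and $F(\mathbf 1)\le 0$, cooperativity makes $t\mapsto z(t,\mathbf 0)$ non-decreasing and $t\mapsto z(t,\mathbf 1)$ non-increasing; both are bounded in the cube, hence converge to equilibria, which by the uniqueness just established must coincide with $e$. For arbitrary $a\in[0,1]^\ell$ monotonicity gives $z(t,\mathbf 0)\le z(t,a)\le z(t,\mathbf 1)$, so the two bounds squeeze $z(t,a)\to e$, yielding global attractivity. Lyapunov stability then follows either by bracketing nearby initial conditions between sub- and super-solutions confined to an arbitrarily small order interval around $e$, or by checking that the Metzler Jacobian $DF(e)$ has negative spectral abscissa; global attractivity together with stability gives global asymptotic stability.
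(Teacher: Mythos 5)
Your argument is sound and shares the paper's skeleton (Metzler Jacobian, hence a cooperative flow on the invariant compact convex cube; uniqueness of the equilibrium; a monotone-systems convergence argument), but the key uniqueness step is carried out by a genuinely different route. The paper proves uniqueness directly on the full state (Theorem~\ref{thm:unique_ex}): it uses the steady-state flow-balance chain to show that $e^i_{n^i}$ determines all of $e^i$, and then derives a contradiction from two distinct equilibria by propagating strict inequalities through the feedback terms. You instead collapse each module to its steady-state input--output map $R^j$, reduce the equilibrium set to fixed points of the $m$-dimensional map $G(y)=\bigl(R^j(u^j(y))\bigr)_j$, and invoke uniqueness of positive fixed points of monotone, strictly sublinear self-maps of a cone; the strict sublinearity comes from the strict concavity of $r_{ss}$ in $\lambda_0$ (from~\cite{rfm_max}) together with $R^j(0)=0$. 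Your route is more modular --- it would apply verbatim to any network whose modules have an increasing, strictly concave, zero-at-zero steady-state response --- at the price of importing the concavity result and nonlinear Perron--Frobenius machinery; the paper's route is elementary and self-contained, working only with the balance equations. Your final squeeze from $z(t,\mathbf 0)$ and $z(t,\mathbf 1)$ is essentially an unpacking of Ji-Fa's theorem~\cite{Ji-Fa01091994}, which the paper simply cites. Two points to tighten: (i) the limits of the monotone trajectories from $\mathbf 0$ and $\mathbf 1$ are a priori only equilibria in the \emph{closed} cube, so you must exclude boundary equilibria before your interior-uniqueness result applies --- this is exactly where your ``the source reaches every module'' hypothesis enters (the paper makes the same assumption, via $c^i_0>0$ in Theorem~\ref{thm:unique_ex} and the blanket assumption $u^i_{ss}>0$, and additionally argues that every trajectory enters the interior); and (ii) for Lyapunov stability, do not rely on $DF(e)$ having negative spectral abscissa, which is not guaranteed for a cooperative system; the order-interval bracketing (or citing Ji-Fa's theorem directly) is the safe route.
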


Theorem~\ref{thm:ss} implies that all the RFMIO 
 state-variables (and thus the network output)
 converge to a unique steady-state value.
We assume throughout that at steady-state every~$u^i_{ss}$
is positive. Indeed,~$u^i_{ss}=0$ implies that~$y^i_{ss}=0$
and thus RFMIO~$i$ can simply be deleted from the network.

\begin{example} \label{ex:rfms_serial}
%%%%%%%%%%%%%%%%%%%%%%%%%%%%%%%%%%%%%%%%%%%%
Consider a network with $m=2$ RFMIOs, where RFMIO~1 is fed with a unit source, and the output of RFMIO~1 feeds the input of RFMIO~2 (see
 Fig.~\ref{fig:serial_2rfmio}).
The output of RFMIO~2 is defined as the network output~$y$. Both RFMIOs have dimension~$n=3$, RFMIO~1 with rates $[\lambda_0^1, \lambda_1^1, \lambda_2^1, \lambda_3^1] = [1, 1, 1/4, 1]$, and RFMIO~2 with rates~$[1,1,1,1]$.
Fig.~\ref{fig:ex_rfms_serial}
depicts the state-variables~$x^1_i(t)$ of RFMIO~1 and~$x^2_i(t)$ of RFMIO~2, $i=1,2,3$, as a function of~$t$, for the initial condition~$x^1_i(0)=x^2_i(0)=1/10$.
Note that the rate~$\lambda_2=1/4$ in RFMIO~1 leads to a ``traffic jam'' of ribosomes in this RFMIO, that is, the steady-state densities in the first two sites are high, \updt{whereas the density in the third site is  low}.
This yields
a low output rate from this RFMIO. The second RFMIO thus converges to a steady-state with low densities.
%%%%%%%%%%%%
\end{example}

\begin{figure}
	\centering
	%\scalebox{1.25}{\input{serial_2rfmio.eps_tex}}
	 \includegraphics[scale=1.2]{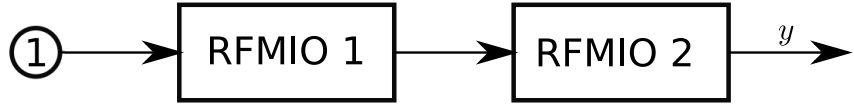}
	\caption{Network of two serially connected
	RFMIOs    in Example~\ref{ex:rfms_serial}.} \label{fig:serial_2rfmio}
\end{figure}

\begin{figure}
 \begin{center}
\includegraphics[width= 9cm,height=7cm]{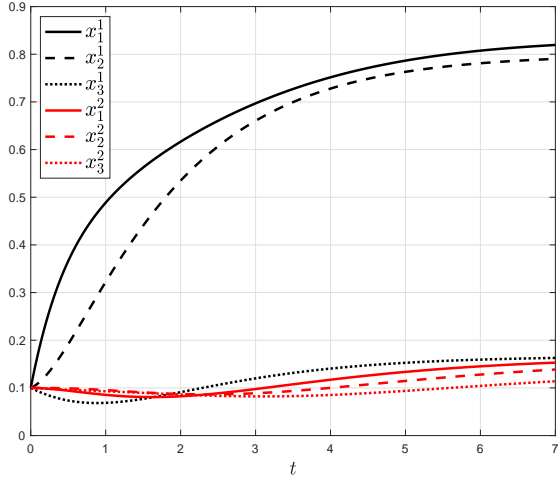}
\caption{The state variables $x^1_i(t)$ of RFMIO~1 and $x^2_i(t)$ of RFMIO~2, $i=1,2,3$, as a function of $t$ for the network in Example~\ref{ex:rfms_serial}.}\label{fig:ex_rfms_serial}
 \end{center}
\end{figure}

Example~\ref{ex:rfms_serial} may represents a case of re-initiation: one ORF appears in the 5'UTR of the second ORF and the ribosomes finishing the translation of the first ORF start translating the second one \cite{Luukkonen1995,Kozak2001,Kochetov2008}. In this case, a low elongation rate along the
first ORF is expected to yield a low density of ribosomes in the second ORF.

\subsection{Optimizing the Network Output Rate}
%%%%%%%%%%%%%%%%%%%%%%%%%%%%%
Several papers considered optimizing the production rate in  a \emph{single}~RFM or some variant of the~RFM~\cite{alexander2017,rfm_down_regul,rfm_max,HRFM_concave,RFM_r_max_density}.
Here, we study a different problem, namely, maximizing
 the steady-state output in a network of RFMIOs w.r.t. the control (or connection) weights
in the network. In other words, the problem is how to
distribute  the traffic between the different RFMIOs  in the network so that
the steady-state output is maximized.
\begin{problem}\label{prob:maxoutgen}
%%%%%%%%%%%%%%%%%%%%%%%%%%%%%
Given  a   network of $m$ RFMIOs with a   network output
\[
y(t):=\sum_{j=1}^{m} v_{j,m+1} y^j(t),
\]
maximize the steady-state network output~$y_{ss}:=\lim_{t\to\infty} y(t)$
 w.r.t. the control variables $v_{k,j}$ subject to the constraints:
	\begin{align}\label{eq:constrain}
	%%%%%%%%
	&v_{k,j} \in[0,1] \text{ for all }   k\in \{0,1,\dots,m\},  \; j\in \{1,\dots,m+1\} , \nonumber \\
	& \sum_{j} v_{k,j}=1 \text{ for all }   k\in \{0,1,\dots,m\}.
	\end{align}
	%%%%
\end{problem}

The constraints here guarantee the feasibly of the network.
However, the results below remain valid even if the second constraint in~\eqref{eq:constrain} is replaced by~$\sum_{j} v_{k,j}\leq 1 $
 for all~$k\in \{0,1,\dots,m\}$.

 The next result is instrumental for
 analyzing Problem~\ref{prob:maxoutgen}.

\begin{proposition}\label{prop:concavegen}
%%%%%%%%%%%%%%%%%%%%%%%%%%
%Consider a     network of   RFMIOs.
% Let $v$ denote the collection of all control weights.
%The mapping $v \to y_{ss}(v)$ is not necessarily   concave. 
Under a suitable reparametrization
	Problem~\ref{prob:maxoutgen} becomes a convex optimization problem.
%%%%%%%%
\end{proposition}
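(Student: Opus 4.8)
The plan is to combine Theorem~\ref{thm:spect} with the strict concavity of the single-RFM steady-state production rate. At any network steady state every input $u^j_{ss}$ is constant, so RFMIO~$j$ reduces to an RFM with initiation rate $u^j_{ss}\lambda_0^j$, and its output is
\[
y^j_{ss}=g_j(u^j_{ss}):=r_{ss}\big(u^j_{ss}\lambda_0^j,\lambda_1^j,\dots,\lambda_{n^j}^j\big).
\]
The map $u\mapsto(u\lambda_0^j,\lambda_1^j,\dots,\lambda_{n^j}^j)$ is affine and $r_{ss}$ is strictly concave on $\R^{n^j+1}_{++}$, so each $g_j$ is concave; and since raising $\lambda_0$ raises $r_{ss}$ (the Perron-root monotonicity noted after Theorem~\ref{thm:spect}), each $g_j$ is strictly increasing. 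Concavity and monotonicity of $g_j$ are the only facts about the internal dynamics I will use.

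The reparametrization I would use replaces the proportions $v_{k,j}$ by the steady-state \emph{flows}
\[
w_{k,j}:=v_{k,j}\,y^k_{ss}\ge 0,\qquad k\in\{0,\dots,m\},\ j\in\{1,\dots,m+1\},
\]
where $y^0_{ss}=y_0$. In these variables the input of RFMIO~$j$ is the linear form $u^j_{ss}=\sum_k w_{k,j}$, the objective is the linear form $y_{ss}=\sum_{j=1}^m w_{j,m+1}$, and the constraint $\sum_j v_{k,j}=1$ together with $y^k_{ss}=g_k(u^k_{ss})$ becomes the flow-balance system
\[
\sum_{i=1}^{m+1} w_{j,i}=g_j\Big(\sum_{k=0}^{m} w_{k,j}\Big),\quad j=1,\dots,m,\qquad \sum_{j} w_{0,j}=y_0 .
\]
I would first verify that this is a genuine change of variables: given nonnegative $w$ satisfying these equations, setting $v_{k,j}=w_{k,j}/\sum_i w_{k,i}$ returns feasible weights, and $y^k=\sum_i w_{k,i}$ solves the reduced fixed-point equations, so by uniqueness of the steady state (Theorem~\ref{thm:ss}) it is exactly the steady state induced by $v$.

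Convexity then follows by relaxing each balance equation to $\sum_i w_{j,i}\le g_j(\sum_k w_{k,j})$. The constraint function $w\mapsto \sum_i w_{j,i}-g_j(\sum_k w_{k,j})$ is a linear term minus a concave function of a linear argument, hence convex, so each relaxed constraint cuts out a convex set; intersecting with $w\ge 0$ and $\sum_j w_{0,j}\le y_0$ yields a convex feasible region over which the linear objective is maximized, i.e.\ a convex program.

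The main obstacle is the exactness of this relaxation, i.e.\ that the convex program and the equality-constrained (hence the original) problem share the same optimal value. I would establish tightness at an optimum by a monotonicity/rerouting argument: if some node $j_0$ had slack, $\sum_i w^*_{j_0,i}<g_{j_0}(u^*_{j_0})$, then increasing the sink flow $w_{j_0,m+1}$ by a small $\varepsilon>0$ violates no constraint (only node $j_0$'s own balance is affected, and it moves toward equality) while strictly increasing the objective, contradicting optimality. Thus every balance holds with equality at the optimum, so the optimizer is feasible for Problem~\ref{prob:maxoutgen}. The remaining care concerns degenerate nodes with $u^*_j=0$, handled via $g_j(0^+)=0$ and the standing assumption $u^i_{ss}>0$ (such nodes can be deleted). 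Incidentally, under the relaxed feasibility $\sum_j v_{k,j}\le 1$ permitted in the remark after~\eqref{eq:constrain}, the balance relations arise directly as the inequalities above and this exactness step becomes unnecessary.
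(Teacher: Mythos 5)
Your proposal is correct and follows essentially the same route as the paper's proof: reparametrize the proportions $v_{k,j}$ into steady-state flows $w_{k,j}=v_{k,j}y^k_{ss}$, use the concavity of the single-RFMIO input-to-output map inherited from the strict concavity of $r_{ss}$ in the rates, and relax the flow-balance equalities to inequalities of the form (linear) $\le$ (concave of linear), which cut out a convex feasible set. Your additional verification that the relaxation is tight at the optimum (the rerouting-to-the-sink argument) and that the optimal $v$'s can be recovered from the optimal $w$'s is a welcome strengthening of a step the paper's proof passes over quickly.
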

%%%
The following example demonstrates this result.
 %%%%%%%%%%%%%%%%%%%%%%%%%%%%%%%%%%%%%%%%%%%%%%%%%%%%%%%%%%
\begin{example}\label{exa:noncon}
%%%
Consider an RFMIO with a single site  and rates~$\lambda_0=\lambda_1=1$:
\begin{align}\label{eq:scarfm}
						\dot x_1&= (1-x_1)u -x_1,\nonumber\\
						y& = x_1.
\end{align}
Suppose that the input is~$u=0.1+v y$, where~$v\in[0,1]$, that is, there is a feedback connection
from the output of the~RFMIO back to the input with a weight~$v$.
It is straightforward to verify that for any~$x_1(0) \in [0,1]$ the solution~$x_1(t)$   converges to
the value
\[
			e_1(v) := \begin{cases} \frac{  v-1.1+\sqrt{  (1.1-v)^2 +0.4 v}    }{2v} & \text{if } v>0,\\
																1/11 & \text{if } v=0.\end{cases}
\]
%%%
It is also straightforward to verify that~$\frac{d^2}{dv^2}e_1>0$ for all~$v\in (0,1)$,
so~$e_1$ and thus the steady-state output~$y_{ss}(v)=e_1(v)$
is \emph{not} concave in~$v$.
We conclude that  the optimization problem:
\be\label{prob:opii}
\max y_{ss}(v) \text{ subject to } v\in [0,1]
\ee
 is \emph{not}
 a convex optimization problem.
 Nevertheless, since in this particular case~$y_{ss}(v)$ is a scalar function,
 it is easy to solve
this optimization problem yielding (all numerical values in this paper are to four digit accuracy)
\be\label{eq:yssopt}
y_{ss}^*:=y_{ss}(1)=0.2702.
\ee

The  reparametrization is based on
redefining the input  as~$u=0.1+w$, with the constraint~$w\in[0, y]$.
 Now the steady-state output of~\eqref{eq:scarfm}  is
$y_{ss}(w)=\frac{0.1+w}{ 1.1+w }$, and this function is strictly concave in~$w$.
 At steady-state, the constraint~$w\leq y$ means that $w \leq \frac{0.1+w}{ 1.1+w }$.
Thus, now the  maximization problem  is
\be\label{eq:pcvre}
\max y_{ss}(w) \text{ subject to } 0\leq w \leq \frac{0.1+w}{ 1.1+w },
\ee
 and this
constraint
 defines a convex set of admissible~$w$'s,
 so~\eqref{eq:pcvre} is a   convex optimization problem. The solution of this problem is
obtained at~$w^*=0.2702$ for which~$y_{ss}(w^*)=0.2702$.
We conclude that the optimal values correspond to~$w^*=y_{ss}^*$, and
this implies that the solution to the optimization problem~\eqref{prob:opii}
  is~$v^*=1$. Thus, we can obtain the optimal weights from
	the solution
	of the reparametrized problem.
%%%
%%%
\end{example}

The next example demonstrates a synthetic and more complex network that includes feedback connections.

\begin{example} \label{ex:complex}
%%%%%%%%%%%%%%%%%%%%%%%%%%%%%%%%%%%%%%%%%%%%%%%%%%%%%%%%%%%%%%%%%%%%%%%%%%%%%%%%%%%%%%%
	Consider the network depicted in Fig.~\ref{fig:complex_diagram}.
	The network consists of four
	RFMIOs. RFMIO~1 and RFMIO~4 have dimension~$n=4$, and
	rates~$[2, 2, 2, 2, 2]$.
	RFMIO~2 and RFMIO~3 have  dimension~$n=3$, and rates~$[1, 1, 1, 1]$.
	A unit source feeds RFMIO~1 and RFMIO~2 with proportions~$v_1$ and~$1-v_1$, respectively.
	Another control parameter,~$v_2$, determines the  division of the output of RFMIO~2.
	%%%
	The total network output is defined as~$y:=\frac{3}{4}y_3+y_4+(1-v_2)y_2$.
	  Fig.~\ref{fig:complex_diagram_res}
	  depicts the steady-state output as a function
		of the control parameters~$v_1,v_2$. It may be seen that~$y$ is
	  a concave function.
	The optimal   output value~$y^*=0.8595$ is obtained for~$v_1^*=0.48$, and~$v_2^*=0$.
	The value~$v_2^*=0$ is reasonable, as this implies that all the output~$y_2$ of RFMIO~2 goes
	directly to the network output~$y$ rather than first to RFMIO~4 and from there, indirectly,
	to~$y$.
	%%%%%%%%%%%%%%%%%%%%%%
\end{example}

\begin{figure}
	\centering
	%\scalebox{0.8}{\input{fig_ex3_diagram.eps_tex}}
	 \includegraphics[scale=0.7]{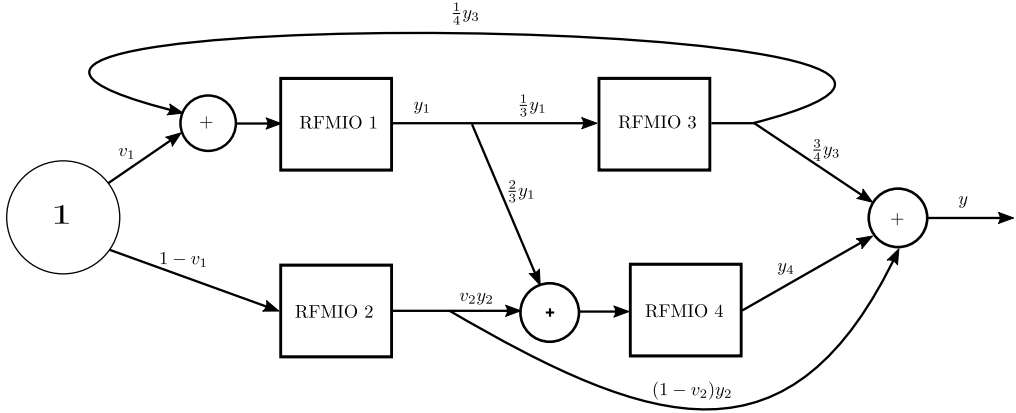}
	\caption{Topology of the network  in Example~\ref{ex:complex}.  }
	\label{fig:complex_diagram}
\end{figure}

\begin{figure}
	\centering
	%\scalebox{0.8}{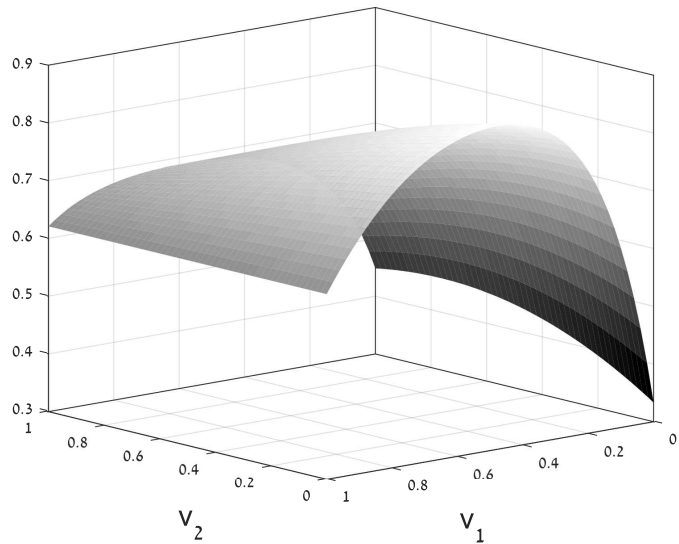}
	 \includegraphics[scale=0.7]{fig_ex3_res.eps}
	\caption{Steady-state output $y_{ss}(v_1,v_2)$
	for the network   in Example~\ref{ex:complex} as a function of $v_1,v_2$.}
	\label{fig:complex_diagram_res}
\end{figure}

%%%%%%%%%%%%%%%%%%%%%%%%%%%%%%%%%%%%%%%%%%%%%%%%%%%%%%%%%%%%%%%%%%%%%%%%%%%%%%%%%%%%%%%
To further explain the biological motivation of the optimization problem studied here,
consider  for example the  metabolic pathway or  protein complex described
in  Fig.~\ref{fig:operon}.
This includes a set of enzymes/proteins that are involved
 in a specific
 stoichiometry (see, for example, \cite{Lalanne2018}). In this case the objective function is of  the form~$b'y  $,
 where $y$ is a vector of production rates of the different proteins in the metabolic pathway or protein complex, and~$b$ is their stoichiometry vector.
 Fig.~\ref{fig:operon}
 depicts  an operon with four coding regions on the same transcript. The initiation rate to each ORF  is affected by an ``external'' factor (e.g. the intracellular pool of ribosomes), and also by the
 ``leakage'' of ribosomes from the previous~ORF.  The proteins produced in the operon ($P_1,\dots,P_4$) with production rates $y_1,\dots,y_4$ are part of a metabolic pathway where they are ``needed'' with a stoichiometry vector~$b = [1,3,2,1]'$.
%%%%%%%%%%%%%%%

\begin{figure}
 \begin{center}
%%%%%%%%%%%%%%%%%%%%%%%[width=12cm,height=12cm]
  \includegraphics[scale=0.7]{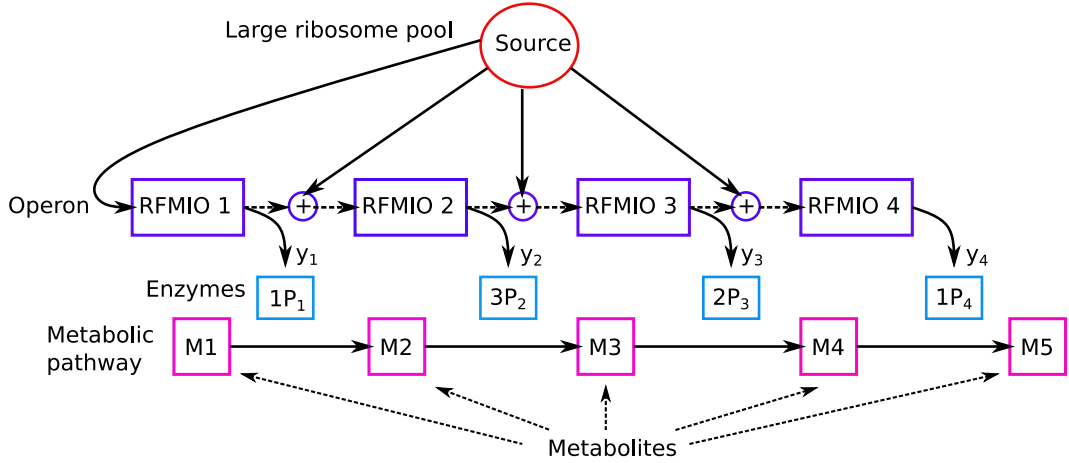}
	\caption{A   network describing
	an operon translation model and a corresponding metabolic pathway.}
	\label{fig:operon}
%%%%%%%%%%%%%%%%%%%%%%%%%%
\end{center}
\end{figure}

%%%%%%%%%%%%%%%
\section*{Analysis of feed-forward networks}\label{sec:Analysis-of-ff-networks}
%%%%%%%%%%%%%%%%%%%%%%%%%%%%%%%%%%%%%%%%%%%%%%%%%%%%%%%%%%%%
In this  section we  further  analyze \emph{feed-forward} networks of RFMIOs, where feed-forward means that for any~$j$
the input~$u_j$
 of RFMIO~$j$  does not depend either directly or indirectly on the output~$y_j$ of RFMIO~$j$.
In other words, there are no feedback connections. In terms of graph theory, this means that
the graph describing the connections is a directed acyclic graph~(DAG). \updt{For these networks
Problem~\ref{prob:maxoutgen} can \updt{be} solved in a more direct way}.

\begin{proposition}\label{prop:concave}
%%%%%%%%%%%%%%%%%%%%%%%%%%
Consider a   feed-forward network of   RFMIOs. Let $v$ denote the collection of all control weights. The mapping $v \to y_{ss}(v)$ is strictly concave.
%%%%%%%%
\end{proposition}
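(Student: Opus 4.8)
The plan is to reduce the whole statement to the steady-state input--output behaviour of a \emph{single} RFMIO and then propagate concavity through the acyclic structure of the network. First I would isolate the basic building block: for RFMIO~$j$, let $g_j(u)$ denote its steady-state output when driven by a constant input $u>0$. Feeding a constant input $u$ turns RFMIO~$j$ into an ordinary RFM with initiation rate $u\lambda_0^j$, so by Theorem~\ref{thm:spect} we have $g_j(u)=r_{ss}(u\lambda_0^j,\lambda_1^j,\dots,\lambda_{n^j}^j)$. Since $r_{ss}$ is strictly concave on $\R^{n^j+1}_{++}$ and the map $u\mapsto(u\lambda_0^j,\lambda_1^j,\dots,\lambda_{n^j}^j)$ is affine and non-constant (because $\lambda_0^j>0$), its restriction $g_j$ is strictly concave in $u$; it is strictly increasing by the Perron-root monotonicity argument recorded right after Theorem~\ref{thm:spect}, and $g_j(0)=0$ by continuity. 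Thus each node contributes a strictly concave, increasing, origin-preserving scalar map.

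Next I would use the DAG structure. Ordering the RFMIOs topologically, the steady-state relations are $u^j_{ss}=\sum_{k} v_{k,j}\,y^k_{ss}$ and $y^j_{ss}=g_j(u^j_{ss})$, with $y^0_{ss}=y_0$ fixed and $y_{ss}=\sum_{j=1}^m v_{j,m+1}\,y^j_{ss}$; acyclicity means these can be resolved in order with no circular dependence. The naive route would be an induction on the topological order using the composition rule ``concave-increasing of concave is concave.'' The hard part --- and the step I expect to be the main obstacle --- is that $u^j_{ss}$ is \emph{not} a concave function of $v$: it contains the bilinear products $v_{k,j}\,y^k_{ss}$, and a product of a weight by an upstream output (a nonconstant concave function of the remaining weights) is in general neither concave nor convex. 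This is exactly the obstruction already visible for a single loop in Example~\ref{exa:noncon}, and it is what prevents a purely termwise induction and forces a change of variables.

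To overcome it I would pass to the \emph{edge-flow} variables $f_{k,j}:=v_{k,j}\,y^k_{ss}$, the same reparametrization used in Example~\ref{exa:noncon} and in Proposition~\ref{prop:concavegen}. In these variables the coupling becomes affine: each node input is $u^j_{ss}=\sum_k f_{k,j}$, each node output $g_j(\sum_k f_{k,j})$ is concave, the feasibility/conservation relations take the convex form $\sum_j f_{k,j}\le g_k(\sum_i f_{i,k})$ (the right-hand side being concave), and the objective $y_{ss}=\sum_{j=1}^m f_{j,m+1}$ is linear. For a feed-forward network acyclicity guarantees that this substitution is a well-defined triangular change of coordinates, since the flows leaving a node depend only on flows entering nodes that precede it in the topological order; hence the admissible flow set is convex and $y_{ss}$ is concave in the flows. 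The remaining point to verify carefully is the transfer back to the original weights via $v_{k,j}=f_{k,j}/y^k_{ss}$ along each directed path, which for the acyclic case yields concavity of $v\mapsto y_{ss}(v)$.

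Finally, strictness is inherited from the strict concavity of each $g_j$ obtained in the first step: along any nondegenerate direction in $v$ some node input $u^j_{ss}$ varies nontrivially, and the strict inequality produced by $g_j$ survives the acyclic composition and the nonnegative combination defining $y_{ss}$. I would also dispose of the degenerate cases where some $y^k_{ss}=0$ (so that $v_{k,j}=f_{k,j}/y^k_{ss}$ is ill-defined): by the remark following Theorem~\ref{thm:ss} such a node carries no flow and can simply be deleted, reducing to a smaller feed-forward network in which the above argument applies.
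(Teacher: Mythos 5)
Your opening step (strict concavity and monotonicity of the single‑node input--output map $g_j$, obtained from the spectral representation) and your layering of the DAG coincide with the paper's argument, which partitions the RFMIOs into the sets $\mathcal O_1,\mathcal O_2,\dots$ and inducts over them using the fact that $u\mapsto y_{ss}$ is strictly concave for a single RFMIO and that first‑layer inputs are affine in the weights. You are also right to single out the bilinear products $v_{k,j}\,y^k_{ss}(v)$ on interior edges as the delicate point. The gap is in your resolution of it. Passing to the edge‑flow variables $f_{k,j}=v_{k,j}y^k_{ss}$ shows that $y_{ss}$ is concave as a function of $f$ over a convex feasible set --- but that is the content of Proposition~\ref{prop:concavegen} and Theorem~\ref{thm:convex_opt}, not of Proposition~\ref{prop:concave}. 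Concavity is not preserved under a nonlinear change of coordinates, and the map $v\mapsto f$ is nonlinear for exactly the same bilinear reason; your sentence asserting that the transfer back via $v_{k,j}=f_{k,j}/y^k_{ss}$ ``yields concavity of $v\mapsto y_{ss}(v)$'' is precisely the statement to be proved and is given no justification. The paper's own Example~\ref{exa:noncon} is the cautionary instance: there the reparametrized problem is concave in $w$ while $y_{ss}(v)$ is demonstrably \emph{not} concave in $v$, and acyclicity alone does not repair the quotient $f_{k,j}/y^k_{ss}(v)$, which still depends nonlinearly on the upstream weights.

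By contrast, the paper's proof never leaves the $v$ coordinates: for a node in $\mathcal O_1$ the input $v^1u^1+\dots+v^pu^p$ is affine in the weights because the $u^i$ are \emph{constant} sources, so strict concavity follows from the composition of a concave increasing function with an affine map, and the induction then proceeds to $\mathcal O_2$ and beyond. Note that the step you flagged is also the step the paper leaves implicit (``we can now proceed to RFMIOs in $\mathcal O_2$, and so on''): when the weights on interior edges are themselves decision variables, a node input at depth two is a sum of products of a weight with a nonconstant concave function of upstream weights, and the composition rule no longer applies termwise. So your diagnosis of the obstacle is sound and sharper than the paper's treatment, but the flow reparametrization does not overcome it; as written, your argument establishes the convex‑optimization reformulation (Theorem~\ref{thm:convex_opt}) rather than concavity of $v\mapsto y_{ss}(v)$. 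To close the proof along the paper's lines you would need either to restrict the varied weights to edges emanating from constant sources (as in Example~\ref{ex:paralel}), so that every node input is a fixed nonnegative combination of concave functions of $v$, or to supply a separate argument that the bilinear terms cannot destroy concavity in a feed‑forward topology.
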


The following example demonstrates this result.
\begin{example} \label{ex:paralel}
%%%%%%%%%%%%
	Consider
	the network   of two RFMIOs described in Fig.~\ref{fig:ex_paralel}.
	Each   RFMIO  has dimension~$n=3$. The rates of RFMIO~1
	  are $[\lambda_0^1, \lambda_1^1, \lambda_2^1, \lambda_3^1] = [1, 1, 1, 1]$,
	and those of RFMIO~2 are~$[2, 2, 2, 2]$. In other words, every
	rate  in RFMIO~1 is slower than the corresponding rate
	in RFMIO~2.
	A unit source feeds both RFMIO~1  with $u_1=v  $,
	and RFMIO~2  with $u_2=1-v$. The network output~$y(t)$
	is defined to be the sum of the two RFMIO outputs.
	Fig.~\ref{fig:ex_paralel_res}
	depicts the network
	steady-state output  as a function of~$v\in[0,1]$.
	It may be seen that~$y_{ss}(v)$
	is a strictly concave function of~$v$, and in particular that there exists a
	unique value~$v^*=0.3971$ for which~$y_{ss}$ is maximized. This  corresponds
	to feeding a smaller [larger] part
	of the joint source to RFMIO~1 [RFMIO~2]. This is reasonable,
	as RFMIO~1 has slower rates than RFMIO~2. Hence,
	it is possible to ``direct'' more traffic
	to RFMIO~2 while still avoiding ``traffic jams'' in this RFMIO.
	%%%%%
\end{example}

\begin{figure}
	\centering
	%\scalebox{0.8}{\input{fig_ex1.eps_tex}}
	 \includegraphics[scale=0.7]{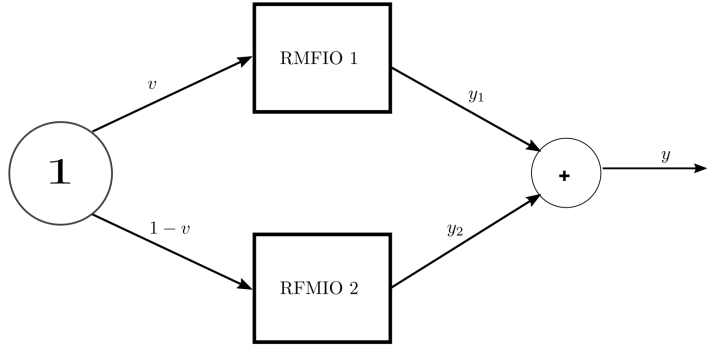}
	\caption{ Two RFMIOs fed from a common unit source.
	The input to RFMIO~1 is~$u_1(t)\equiv v \in[0,1]$, and the input to RFMIO~2 is~$u_2(t)\equiv  1-v$.
	The network output is defined as the sum  of the~RFMIO outputs~$y(t):=y_1(t)+y_2(t)$.
	\label{fig:ex_paralel}}
\end{figure}
	
\begin{figure}
 \begin{center}
\includegraphics[width= 11cm,height=8cm]{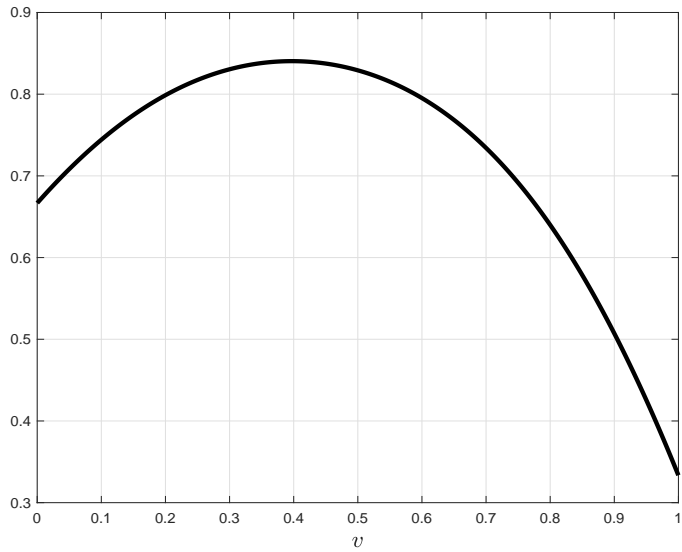}
\caption{Steady-state network output~$y_{ss}(v)$ for the network in Example~\ref{ex:paralel} as a function of the control parameter~$v$.}\label{fig:ex_paralel_res}
 \end{center}
\end{figure}

This example demonstrates the  problem of dividing
 a common resource in a biological network between
several ``clients'' such that some  overall
performance measure is optimized. For example, this network can represent the problem of optimizing
  heterologous  protein levels   by  introducing a  copy of the gene to the host genome in {\em multiple} locations. This raises the question of the optimal strength of the initiation rate of the different gene copies (engineered via   manipulation of the 5'UTR and the beginning of the~ORF, e.g. via the introduction of Shine-Dalgarno sequence with different strengths   and manipulation   of the mRNA folding in this region)~\cite{Salis2009,Shaham2017}.

%\begin{figure}[t]
%	\centering
%	\scalebox{0.64}{\input{fig_ex1_res.eps_tex}}
%	\caption{Total steady-state output~$y=y_1+y_2$  for
%	the system in Example~\ref{ex:paralel} as a function of~$v$.
%	\label{fig:ex_paralel_res}}
%\end{figure}

\subsection{Spectral representation of the network steady-state}
%%%%%%%%%%%%%%%%%%%%%%%%%%%%%%%%%%%%%%%%%%%%%%%%%%%%%%%%%%%%
Recall that in a single RFM it is possible to obtain the steady-state density (and thus the
steady-state production rate) using a spectral approach and without numerically simulating the dynamics.
The same property immediately carries over to feed-forward networks of~RFMIOs.
To explain this, consider an~RFMIO, say RFMIO~$j$,
 that is  fed only by a constant source. This is just an RFM and its steady-state density and output can be calculated as in Thm.~\ref{thm:spect}.
Now consider  an RFMIO that is fed by the output of RFMIO~$j$. Its input converges to a steady-state value~$u_{ss}$,
and since the RFMIO is contractive~\cite{RFM_entrain},   its density converges to a steady-state
that is identical to the steady-state of  an RFMIO with the constant input~$u(t)\equiv  u_{ss}$
(see e.g.~\cite{con_with_inputs,lars16}).
We can now determine the steady-state values
in the consecutive~RFMIOs  and so on.
The next example demonstrates this for a simple network.

\begin{example} \label{ex:rfms_serial_propor}
%%%%%%%%%%%%%%%%%%%%%%%%%%%%%%%%%%%%%%%%%%%%
Consider the network in Example~\ref{ex:rfms_serial} that includes $m=2$ RFMIOs both with
dimension~$n=3$.
The first RFMIO has rates $[\lambda_0, \lambda_1, \lambda_2, \lambda_3] = [1, 1, 1/4, 1]$ and input~$u(t)=1$.
The corresponding Jacobi matrix is
\[
                 \begin{bmatrix}
             %     J:= \left[ \begin{smallmatrix}
%%
 0 &  1   & 0 &0 & 0 \\
1  & 0  &  1   & 0  &0 \\
 0&  1  & 0 &  2    &  0 \\
%%%
%%%
 0& 0  &2  & 0& 1      \\
%%%
 0& 0  &0 & 1  & 0
%%%
 \end{bmatrix}.
 % \end{smallmatrix}\right].
\]
The Perron eigenvalue [eigenvector]
of this matrix is~$\sigma=\sqrt{6}$ [$\zeta=\begin{bmatrix}
1/2,\sqrt{3/2}, 5/2,\sqrt{6},1 \end{bmatrix}'$], and using Thm.~\ref{thm:spect},
 we conclude that
the steady-state density in RFMIO~1 is $e=\begin{bmatrix} 5/6&4/5&1/6\end{bmatrix}'$ (compare with Fig.~\ref{fig:ex_rfms_serial}),
 and the steady-state output is~$y_{ss}=\sigma^{-2} =1/6$.
We can now analyze the second RFMIO. Its rates are all one and the input is the output of RFMIO~1, so at steady-state the rates are~$\begin{bmatrix}  1/6&1&1&1  \end{bmatrix}'$.
The corresponding Jacobi matrix is
\[
                 \begin{bmatrix}
             %     J:= \left[ \begin{smallmatrix}
%%
 0 &  \sqrt{6}   & 0 &0 & 0 \\
 \sqrt{6}   & 0  &  1   & 0  &0 \\
 0&  1  & 0 &  1    &  0 \\
%%%
%%%
 0& 0  &1  & 0& 1      \\
%%%
 0& 0  &0 & 1  & 0
%%%
 \end{bmatrix}.
 % \end{smallmatrix}\right].
\]
The Perron eigenvalue [eigenvector]
of this matrix is~$\sigma=2.6819$ [$\zeta=\begin{bmatrix}
12.7192& 13.926& 6.1926& 2.6819& 1 \end{bmatrix}'$], and using Thm.~\ref{thm:spect},
 we conclude that
the steady-state density in RFMIO~2 is $e=\begin{bmatrix}
0.1658&0.1615& 0.139
\end{bmatrix}'$ (compare with Fig.~\ref{fig:ex_rfms_serial}),
 and the steady-state output is~$y_{ss}=\sigma^{-2} =0.139$.
%%%%%%%%%%%%
\end{example}

%%%%%%%%%%%%%%%%%%%%%%%%%%%%%%%%%%%%%%%%%%%%%%%%%%%%%%%%
\updt{Now that we have considered networks with  and without
feedback  connections, we are ready to demonstrate how
 Problem~\ref{prob:maxoutgen} can be efficiently solved.} For a feed-forward network, Prop.~\ref{prop:concave} implies  that the objective function of
Problem~\ref{prob:maxoutgen} is strictly concave. For a network with feedback connections,
Prop.~\ref{prop:concavegen} implies that Problem~\ref{prob:maxoutgen} can \updt{be} reparametrized so that it becomes strictly concave.
The first [second] constraint in~\eqref{eq:constrain} is convex [affine] and this implies the following result (see e.g.~\cite{convex_boyd}).

\begin{theorem}\label{thm:convex_opt}
%%%%%%%%%%%%%%%%%%
Problem~\ref{prob:maxoutgen} can always  be cast as  a strictly convex optimization problem.
%%%%%%%%%%
\end{theorem}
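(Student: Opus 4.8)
The plan is to reduce Theorem~\ref{thm:convex_opt} to the two concavity statements already in hand, namely Proposition~\ref{prop:concave} for feed-forward networks and Proposition~\ref{prop:concavegen} for networks with feedback, together with the elementary observation that maximizing a strictly concave function over a convex set is by definition a strictly convex optimization problem. Thus I only need to verify two things in each case: (a)~that the objective, possibly after reparametrization, is strictly concave in the decision variables; and (b)~that the feasible set cut out by the constraints~\eqref{eq:constrain} is convex. Granting both, the conclusion is immediate by standard convex analysis (see e.g.~\cite{convex_boyd}).

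First I would treat the feed-forward case, where no reparametrization is required. By Proposition~\ref{prop:concave} the map $v \mapsto y_{ss}(v)$ is strictly concave, so the objective is already strictly concave in the native proportion weights. It remains to check convexity of the feasible set. In~\eqref{eq:constrain} each box constraint $v_{k,j}\in[0,1]$ is polyhedral hence convex, and each normalization $\sum_j v_{k,j}=1$ is affine hence convex; the feasible set is their intersection, a product of probability simplices, and is therefore convex. Maximizing a strictly concave function over this convex set is a strictly convex optimization problem, which settles this case.

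Next I would treat the general case with feedback. Here Example~\ref{exa:noncon} shows that $y_{ss}(v)$ need not be concave in the proportion weights, so the native formulation can fail to be convex. I would invoke the reparametrization of Proposition~\ref{prop:concavegen}: replace each proportion weight $v_{k,j}$ by the absolute flow $w_{k,j}:=v_{k,j}\,y^k_{ss}$ carried along that connection. The normalization $\sum_j v_{k,j}\le 1$ then becomes $\sum_j w_{k,j}\le y^k_{ss}$, and the input of each RFMIO becomes a sum of such flows. Because each steady-state output $y^k_{ss}$ is a concave function of its incoming flows --- a property inherited from the strict concavity of the single-RFM production rate implied by Theorem~\ref{thm:spect} and propagated through the monotone interconnection --- every constraint $\sum_j w_{k,j}-y^k_{ss}\le 0$ has a convex left-hand side and so carves out a convex set, exactly as $w\le y_{ss}(w)$ does in Example~\ref{exa:noncon}. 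The objective stays strictly concave in $w$, yielding a strictly convex optimization problem in the flow variables.

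I expect the genuine obstacle to lie in the feedback case, specifically in confirming that the reparametrized feasible set stays convex once the loop is closed: since each $y^k_{ss}$ now depends on the flow variables through a cycle, one must ensure that the implicitly-defined steady-state output remains concave in $w$ despite the feedback, which is precisely what Proposition~\ref{prop:concavegen} packages. A secondary point I would make explicit is the recovery of admissible proportion weights from the optimal flows, that is, the invertibility of $w^*_{k,j}=v^*_{k,j}\,y^k_{ss}$ at the optimum, as illustrated in Example~\ref{exa:noncon} where $w^*=y^*_{ss}$ forces $v^*=1$. Once these points are in place the remaining steps are purely formal.
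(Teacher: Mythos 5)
Your proposal is correct and follows essentially the same route as the paper: the theorem is obtained by combining Proposition~\ref{prop:concave} (feed-forward case) and the reparametrization of Proposition~\ref{prop:concavegen} (feedback case) with the observation that the constraints in~\eqref{eq:constrain} define a convex set, citing standard convex analysis. Your added remarks on the convexity of the reparametrized constraint set and on recovering the proportion weights from the optimal flows merely make explicit what the paper delegates to the proof of Proposition~\ref{prop:concavegen}.
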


Thm.~\ref{thm:convex_opt} implies in particular that the optimal solution is unique. Moreover, there exist highly
efficient numerical algorithms for computing the unique solution even for very large networks.

%%

%%%%%%%%%%%%%%%%%%%%%%%%%%%%%%%%%%%%%%%%%%%%
\section*{A Biological Example}\label{sec:bio_exp}
%%%%%%%%%%%%%%%%%%%%%%%%%%%%%%%%%%%%%%%%%%%%
In order to demonstrate how our model can be used to address questions arising in
 synthetic biology, we consider the  problem of
  maintaining high growth rates for both a highly expressed heterologous protein and a highly expressed endogenous protein in the cell.
	These issues are currently attracting considerable interest
	as
	 lack of understanding
of the burden of expressing additional genes affects our
ability to predictively   engineer cells
(see e.g.~\cite{burden2018} and the references therein).

	Specifically, we consider the problem of maximizing the sum of
	the
	steady-state production rates of both heterologous and endogenous genes, under the assumption that they share a common ribosomal resource.
	We assume that  the coding regions of the endogenous gene (which may include various regulatory signals), and the coding region of the heterologous gene (which is optimized to include the most efficient codons) cannot be modified. However, it is possible to engineer the UTRs of these genes in order to modulate their initiation rates.
We demonstrate how this
    biological problem can be modeled and analyzed in the framework of our model.

The endogenous gene is  the highly expressed {\scrv} gene YGR192C that encodes the protein TDH3, which is involved in glycolysis and gluconeogenesis.
The heterologous gene is the green fluorescent protein~(GFP) gene (the GFP protein sequence is from gi:1543069), optimized for yeast (i.e. its codons composition was synonymously modified to consist  of optimized yeast codons).
The YGR192C gene ORF consists of $332$ codons, and the GFP gene ORF of $239$ codons.
The   simultaneous translation of these two genes while using a shared resource is modeled as
 depicted in Fig.~\ref{fig:ex_paralel},
 where RFMIO~$1$ (fed by an input $u=v$) models the translation of the~YGR192C gene, and RFMIO~$2$
 (fed by the input $u=1-v$) models the translation of the GFP gene. Here~$v\in[0,1]$ is a parameter that determines the relative amount of
ribosomal resources allocated to each gene.

Similarly to the approach used in \cite{rfm0}, we divide the YGR192C [GFP] mRNA sequnce
into $33$ [$23$] consecutive subsequences: the first  subsequence
 includes the first nine codons (that are also related to later stages of initiation~\cite{Tuller2015}). The other subsequences
  include~$10$ non-overlapping codons each,
except for the  last subsequence in the YGR192C  gene that includes~$13$ codons. This partitioning was found to
optimize the correlation between the~RFMIO predictions and biological data.

We model the translation of the YGR192C [GFP] gene using an~RFMIO with~$n=32$ [$n=23$] sites. To determine the RFMIO paramteres
 we first estimate
the  elongation rates~$\lambda_1,\dots,\lambda_n$, using ribo-seq data for the codon decoding rates~\cite{Dana2014B}, normalized so that the median elongation  rate of all {\scrv} mRNAs becomes~$6.4$ codons per second \cite{Karpinets2006}. The site rate is~$(\text{site time})^{-1}$, where site time is the sum over the decoding times of all the codons   in this site.
%%%%
These rates thus  depend on various factors including  availability of tRNA molecules, amino acids, Aminoacyl tRNA synthetase activity and concentration, and local mRNA folding~\cite{Dana2014B,Alberts2002,Tuller2015}.

The initiation rate (that corresponds to the first subsequence) for the YGR192C gene is estimated based
 on the ribosome density per mRNA levels, as this value is expected to be approximately proportional to
the initiation rate when initiation is rate limiting \cite{rfm0,HRFM_steady_state}. Again, we applied a normalization that brings the median initiation rate of all {\scrv} mRNAs to~$0.8$ mRNAs per second~\cite{Chu2014}, and this results in an initiation rate of $2.1958$ for the YGR192C gene. The GFP initiation rate was set to $0.8$.  A calculation shows that when each gene is modeled separately using an~RFMIO with $u=1$,  the steady-state production rate of the gene YGR192C [GFP]  is $r_{ss} = 0.1859$ [$r_{ss}= 0.1892$].

Fig.~\ref{fig:biol_exp}
depicts the network output $y_{ss}(v)$, as a function of $v\in[0,1]$. The unique maximum $y_{ss}(v^*)=0.3429$ is attained for $v^*=0.4311$, which corresponds to feeding a smaller [larger] part of the common ribosomal
resource to the GFP [YGR192C] gene. This is reasonable, as the steady-state production rate of the GFP gene is slightly larger than the steady-state production rate of the YGR192C gene. This result implies that in order to maximize the sum of the steady-state production rates of the YGR192C gene and the GFP gene, using a common ribosomal resource, their UTRs binding efficiency should be engineered such that~$43\%$ of the ribosomal resource initiates the YGR192C mRNAs, and the remaining~$57\%$ initiates the GFP mRNAs. Our analytical approach can also be used   to
determine the ribosomal allocation that maximizes
 some weighted sum of the two  production rates.

\begin{figure}
 \begin{center}
\includegraphics[width= 9cm,height=6cm]{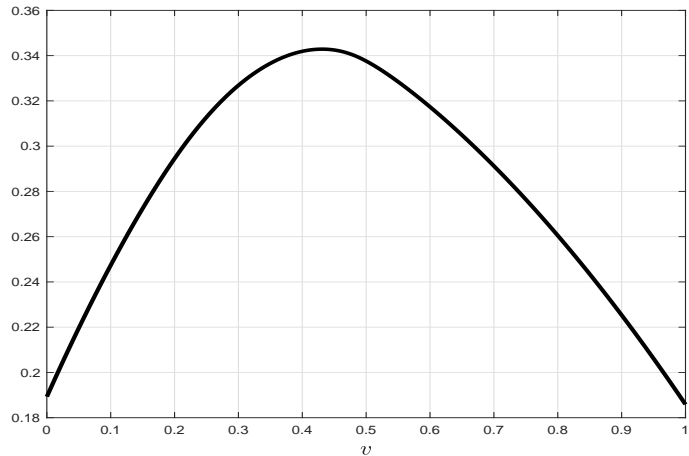}
\caption{Steady-state network output~$y_{ss}(v)$ for the of simultaneous translation of YGR192C (RFMIO $1$) and GFP (RFMIO $2$) genes, using the network depicted in Fig~\ref{fig:ex_paralel}, as a function of the control parameter~$v$.}\label{fig:biol_exp}
 \end{center}
\end{figure}

%%%%%%%%%%%%%%%%%%%%%%%%%%%%%%%%%%%%%%%%%%%%
\section*{Discussion}\label{sec:discussion}
%%%%%%%%%%%%%%%%%%%%%%%%%%%%%%%%%%%%%%%%%%%%

Studying  the flow of biological
``machines'' like ribosomes, RNAPs, or motor proteins along
biological networks like interconnected mRNA molecules or filaments is of paramount importance.
These biological
machines have volume and thus  satisfy a simple  exclusion principle: two machines
 cannot be in  the same place at the same time.

In order to better understand these cellular  biological processes it is important to study the flow of biological
``machines'' along networks of interconnected ``roads'', and not only in isolated processes.
We propose to model and analyze such phenomena  using networks of interconnected~RFMIOs.
The RFMIO dynamics
satisfies a  ``soft''  simple exclusion principle: as a site becomes fuller the effective entry rate into this site decreases. In particular, ``traffic jams'' may evolve behind slowly moving machines.
The input and output of every RFMIO facilitate their
integration into interconnected networks that can be represented using a graph.  The  nodes in this graph
represent the different RFMIOs,
 and the (weighted) edges describe  how the output of each RFMIO is divided between  the inputs
of other (or the same) RFMIO. Our main result shows that under quite general positive feedback connections, such a network always admits a unique steady-state
and any trajectory of the system converges to this steady-state. This opens the door to many interesting research questions, e.g., how does this  steady-state depends
on the various parameters in the network, and for what  feedback connections
is the steady-state output of the network optimized?

We demonstrated using various examples how such networks can model
 interesting biological phenomena like  competition for shared resources, the optimal distribution
of a shared biological resource between several ``clients'',
optimizing the effect of ribosome recycling,  and more.

The RFM is amenable to rigorous analysis, even when the rates are  not
 homogeneous,  using various tools from systems and control theory like
contraction theory~\cite{cast_book,Arcak20111219}, the theory of cooperative dynamical systems~\cite{Ji-Fa01091994,alexander2017,Raveh2016,ent__mast_eq}, convex analysis and more.
This  amenability to rigorous analysis
 carries  over to networks of~RFMIOs. For example,
     the problem of how to connect
  the  RFMIO outputs in the network to  inputs so that the steady-state
network output is maximized
can be cast as  a convex optimization problem. This means that the problem can be
solved efficiently even for large networks~(see, e.g.~\cite{convex_boyd}).

The networks  we propose here   allow modeling complex biological processes
in a coherent and useful manner.
The network models  static connections between the RFMIOs.
The dynamical part is described by the set of~ODEs for each RFMIO.
The parameters used in these models can be inferred based on various sources of large-scale genomic data (see, for example,~\cite{Cohen2018,Dana2014B}) and/or can be predicted directly from the nucleotide sequence of the gene (see, for example,~\cite{Shaham2017,Sabi2017}).
In addition, the analyzed network can be built gradually, one module after another. For example, one  can engineer and study one metabolic pathway  and then   connect
 another pathway to the existing module  (see Fig.~\ref{fig:operon}),
 etc. This yields
 a combined model that describes both biophysical aspects of  gene expression regulation (e.g. translation), and properties of metabolism (e.g. stoichiometry of enzymes and metabolites, and rates of metabolic  reactions).
%The analysis of such model may includes tools for metabolic networks analysis such as Flax Balance Analysis (FBA) \cite{Kauffman2003}.

Topics for further research include
networks  where the weighted connections between the~RFMIOs may also change with time.
This may model for example mRNA molecules that diffuse through  the cell and consequently
change their interactions  with ribosomes,  other mRNAs, etc.
(see. e.g.~\cite{Korkmazhan2017}).

Finally, networks of interconnected~TASEPs have been used to model other natural and artificial phenomena such as vehicular traffic and evacuation dynamics. We believe that the deterministic networks
 proposed  here can also be applied to model and analyze such  phenomena.

%%%%%%%%%%%%%%%%%%%%%%%%%%%%%%%%%%%%
\section*{Appendix: Proofs}
%%%%%%%%%%%%%%%%%%%%%%%%%%%%%%%%%%%%

%%%%%%%%%%%%%%%%%%%%%%%%%%%%%%
The proof of Thm.~\ref{thm:ss}
 is based on showing that the network of~RFMIOs is a cooperative dynamical system~\cite{hlsmith} whose trajectories evolve on a compact state-space
and with  a \emph{unique} equilibrium point\footnote{In this Appendix we use  the term equilibrium point instead of
 steady-state.}   in this state-space. We require the following auxiliary result.
 %For two vectors~$p,q\in \R^n$, we write~$p\ll q$ if~$p_i<q_i$ for all~$i$.

\begin{theorem}\label{thm:unique_ex}
Consider a network of~$m$ RFMIOs  in the form:
%%%%%%
\begin{align}\label{eq:netisovv}
								\dot x^1 &=f^1(x^1,u^1),   &  y^1&=\lambda^1_{n^1} x^1_{n^1} , \nonumber \\
								\vdots\\
								\dot x^m &=f^1(x^m,u^m),   & y^m&=\lambda^m_{n^m} x^m_{n^m}  , \nonumber
\end{align}
with the inputs   given by
\be \label{eq:feedbacko}
				u^i=c^i_0+ \sum_{k=1}^m c_k^i y^k,\quad i=1,\dots,m ,
\ee
where~$c^i_0>0$, and~$c_k ^i\geq  0$ for~$k=1,\dots,m$.
This network admits no  more than a
 single equilibrium point in the 
state-space~$ (0,1)^{n^1}\times\dots\times (0,1)^{n^m}
$.
 %%%%%%%%%%%%
\end{theorem}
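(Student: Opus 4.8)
The plan is to collapse the whole network equilibrium problem onto a finite-dimensional fixed-point equation in the output variables alone, and then to obtain uniqueness of that fixed point from the monotonicity and concavity of the steady-state input--output characteristic of a single RFMIO together with the strict positivity of the constants~$c^i_0$.

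First I would record the steady-state map of one isolated RFMIO. For a constant input~$u>0$, RFMIO~$i$ is just an RFM with initiation rate~$u\lambda_0^i$, so by Thm.~\ref{thm:spect} it has a unique steady-state; write~$y^i=g^i(u)$ for the resulting steady-state output. Three properties of~$g^i$ are needed: (i)~$g^i$ is strictly increasing, which is exactly the Perron--Frobenius monotonicity noted after Thm.~\ref{thm:spect}, since increasing~$\lambda_0^i$ (equivalently~$u$) strictly increases~$\sigma^{-2}$; (ii)~$g^i$ is concave on~$[0,\infty)$ with~$g^i(0)=0$, because~$r_{ss}$ is concave in the rates and~$u\mapsto(u\lambda_0^i,\lambda_1^i,\dots,\lambda_{n^i}^i)$ is affine, so~$g^i$ is the restriction of a concave function to a ray through the origin; (iii)~$g^i$ is continuous. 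Any equilibrium~$z^*=(x^{1*},\dots,x^{m*})$ of~\eqref{eq:netisovv}--\eqref{eq:feedbacko} is determined by its output vector: putting~$u^{i*}=c_0^i+\sum_k c_k^i y^{k*}$ and~$y^{k*}=\lambda_{n^k}^k x_{n^k}^{k*}$, the conditions~$f^i(x^{i*},u^{i*})=0$ are precisely the RFM steady-state equations, whose unique solution recovers~$x^{i*}$ from~$u^{i*}$. Hence it suffices to show that the map~$\phi\colon\R^m_+\to\R^m_{++}$ given by~$\phi_i(y)=g^i\bigl(c_0^i+\sum_{k=1}^m c_k^i y^k\bigr)$ has at most one fixed point, which is automatically in~$\R^m_{++}$ since~$u^i\ge c_0^i>0$ forces~$y^i=g^i(u^i)>g^i(0)=0$.

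Next I would extract two structural properties of~$\phi$. Since each~$g^i$ is increasing and each~$c_k^i\ge0$, the map~$\phi$ is monotone (order-preserving). More importantly,~$c_0^i>0$ yields strict subhomogeneity: for~$\theta\in(0,1)$ and~$y>0$ one has~$c_0^i+\theta\sum_k c_k^i y^k>\theta\bigl(c_0^i+\sum_k c_k^i y^k\bigr)$, so combining strict monotonicity of~$g^i$ with the estimate~$g^i(\theta w)\ge\theta g^i(w)$ (concavity plus~$g^i(0)=0$) gives~$\phi_i(\theta y)>\theta\,\phi_i(y)$ for every~$i$. I would then run the standard min-ratio argument: suppose~$y\neq\tilde y$ are two fixed points in~$\R^m_{++}$ and set~$\theta_1=\max\{s>0:\tilde y\ge s\,y\}$ and~$\theta_2=\max\{s>0:y\ge s\,\tilde y\}$; from~$\tilde y\ge\theta_1 y\ge\theta_1\theta_2\tilde y$ one gets~$\theta_1\theta_2\le1$, and since~$y\neq\tilde y$ at least one of them, say~$\theta_1$, is~$<1$. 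Monotonicity then gives~$\tilde y=\phi(\tilde y)\ge\phi(\theta_1 y)$, while strict subhomogeneity gives~$\phi(\theta_1 y)>\theta_1\phi(y)=\theta_1 y$ in every coordinate, so~$\tilde y>\theta_1 y$ strictly, contradicting the maximality of~$\theta_1$. This forces~$y=\tilde y$, establishing uniqueness of the output fixed point and hence of the network equilibrium.

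The main obstacle is the multidimensional uniqueness step: monotone maps alone routinely admit several fixed points, so everything hinges on promoting the elementary positivity~$c_0^i>0$ into the strict subhomogeneity~$\phi_i(\theta y)>\theta\phi_i(y)$, which is what rules out the existence of a second fixed point via the min-ratio comparison. The supporting lemmas—strict monotonicity and concavity of~$u\mapsto g^i(u)$—are the delicate inputs, but both are handed to us by the spectral representation and the Perron--Frobenius monotonicity already recorded for a single RFM, so the argument reduces to verifying that these single-chain facts survive the affine reparametrization~$\lambda_0^i\mapsto u\lambda_0^i$ and assemble correctly through the feedback relations~\eqref{eq:feedbacko}.
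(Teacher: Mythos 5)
Your proof is correct, but it follows a genuinely different route from the paper's. The paper argues directly on the state variables: it supposes two distinct equilibria, uses the chain of equilibrium identities $\lambda_{i-1}e_{i-1}(1-e_i)=\lambda_i e_i(1-e_{i+1})=\dots=\lambda_n e_n$ to show that the last coordinate of each RFMIO determines all the others monotonically, propagates the resulting componentwise ordering through the feedback terms (for $m>1$ this requires careful bookkeeping of the ratios $e^k_{n^k}/\tilde e^k_{n^k}$ across the RFMIOs), and finally reaches a sign contradiction in the first equation. You instead collapse the problem to the $m$-dimensional fixed-point equation $y=\phi(y)$ in the outputs and obtain uniqueness from the fact that $\phi$ is order-preserving and strictly subhomogeneous, via the classical min-ratio argument for monotone concave/sublinear maps. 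Your reduction is legitimate: distinct equilibria must have distinct output vectors, because for a fixed positive constant input each RFMIO has a unique equilibrium; and the two properties of $g^i$ you need---strict monotonicity, and concavity together with $g^i(0)=0$---are indeed available from the Perron--Frobenius monotonicity and the strict concavity of $r_{ss}$ on $\R^{n+1}_{++}$ recorded after Thm.~\ref{thm:spect}, restricted to the ray $u\mapsto(u\lambda_0^i,\lambda_1^i,\dots,\lambda_{n^i}^i)$ (note $r_{ss}\le u\lambda_0^i$ guarantees the continuous extension by $g^i(0)=0$). What your approach buys is modularity and brevity: the entire feedback structure is disposed of by one standard fixed-point uniqueness lemma, and the argument extends verbatim to any network whose nodes have increasing, concave, zero-at-zero steady-state input--output maps. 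What it costs is reliance on the nontrivial spectral fact that $r_{ss}$ is concave in the rates, whereas the paper's componentwise comparison is elementary and self-contained, using only the algebraic form of the equilibrium equations.
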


Note that this represents a quite general network   as the input to every RFMIO may include
a   contribution from the output of \emph{every} RFMIO in the network, with nonnegative weights.
The technical condition~$c^i_0>0$ is needed to guarantee that~$u^i(t)>0$ for all~$t$ and, in particular,~$u_{ss}^i>0$.

%%%%%%%%%%%%%%%%%%%%%%%%%%%%%%%%%%%%%%%%%%%%%%%%%%%%%%
\begin{proof}[Proof of Theorem~\ref{thm:unique_ex}]
%%%%%%%%%%%%%%%%%%%%%%%%%%%%%%%%%%%%%%%%%%%%%%%%%%%%%%%
We begin by considering the case~$m=1$.
In this case, the dynamics is
\begin{align}\label{eq:dvdfs}
								\dot x_1&=\lambda_0 ( c_0+ c_1  \lambda_n x_n ) (1-x_1)-\lambda_1 x_1(1-x_2), \nonumber \\
								\dot x_2&=\lambda_1 x_1(1-x_2)-\lambda_2 x_2(1-x_3), \nonumber\\
								   &\vdots \\
									\dot x_n&=\lambda_{n-1} x_{n-1}(1-x_n)-\lambda_n x_n, \nonumber
\end{align}
with~$c_0>0$, and~$c_1\geq 0$.
Suppose that~$e \in (0,1)^n$ is an   equilibrium point.
Eq.~\eqref{eq:dvdfs} yields
\begin{align} \label{eq:clsdp}
								 \lambda_0 ( c_0+ c_1  \lambda_n e_n ) (1-e_1)&=\lambda_1 e_1(1-e_2) \nonumber \\
								                          &= \lambda_2 e_2(1-e_3)\nonumber\\
								                          &\vdots \\
									                        &=\lambda_{n-1} e_{n-1}(1-e_n) \nonumber\\
																					 &=\lambda_n e_n. \nonumber
\end{align}
It follows
that~$e_n$ uniquely determines~$e_{n-1}$. Then~$e_n,e_{n-1}$ uniquely determine~$e_{n-2}$, and so on.
 We  conclude that~$e_n$ uniquely determines~$e$.
Suppose that~$\tilde e $, with~$\tilde e \not = e$, is another  equilibrium point in~$(0,1)^n$.
Then~$\tilde e_n \not = e_n$, and we may assume that
\be \label{eq:enntil}
\tilde e_n<e_n.
\ee
Eq.~\eqref{eq:dvdfs} yields
\[
					  \lambda_{n-1} \tilde e_{n-1}(1-\tilde e_n ) 	=\lambda_n \tilde e_n  <
						\lambda_n   e_n
					  =\lambda_{n-1} e_{n-1}(1- e_n ) ,
\]
so
\[
\tilde e_{n-1}<e_{n-1}.
\]
Continuing in this fashion yields
\be\label{eq:podscdx}
 \tilde e_i<e_i , \quad i=1,\dots,n.
\ee
On the other-hand,~\eqref{eq:clsdp} yields
\begin{align*}
											 e_1-\tilde e_1&=\frac {\lambda_n \tilde e_n}{\lambda_0 ( c_0+ c_1  \lambda_n \tilde e_n )}-\frac {\lambda_n e_n}{\lambda_0 (c_0+ c_1  \lambda_n e_n )}\\
											&=\frac{\lambda_n c_0(\tilde e_n-e_n)}{\lambda_0  ( c_0+ c_1  \lambda_n \tilde e_n )  ( c_0+ c_1  \lambda_n e_n )}
											\\&< 0.
\end{align*}
This contradicts~\eqref{eq:podscdx}, so we
 conclude that when~$m=1$ the  network
 admits no more than a single equilibrium.

We now consider the case~$m>1$.
%%%%%%%%%%%%%%%
Suppose that~$e\in (0,1)^{n^1} \times \dots\times (0,1)^{n^m}$ is an equilibrium point.
Write~$e=\begin{bmatrix} e^1 \\\vdots\\ e^m\end{bmatrix}$, where~$e^i :=\begin{bmatrix} e^i_1 &\dots&e^i_{n^i}    \end{bmatrix}'$.
 For~$i=1,\dots,m$, let~$r^i(e):=  c_0^i+ \sum_{k=1}^m c_k^i  \lambda^k_{n^k} e^k_{n^k}   $, i.e. the
steady-state input to   the~$i$th RFMIO. Then
at steady-state
\begin{align} \label{eq:newq}
								 \lambda_0 ^i r^i(e) (1-e_1^i)&=\lambda_1^i e_1^i(1-e_2^i) \nonumber \\
								                          &= \lambda_2^i e_2^i(1-e_3^i)\nonumber\\
								                          &\vdots \\
									                        &=\lambda_{{n^i}-1}^i e_{{n^i}-1}^i(1-e_{n^i}^i) \nonumber\\
																					 &=\lambda_{n^i}^i e_{n^i}^i. \nonumber
\end{align}
We already know  that~$e^i_{n^i}$ uniquely determines~$e^i$.
 Suppose that~$\tilde e\not = e$ is another equilibrium point of the network.
Then~$ \tilde e^i_{n^i} \not = e^i_{n^i}$ for some~$i$.
We may assume that~$\tilde  e^1_{n^1}   < e^1_{n^1}$.
 Arguing  as in the case~$m=1$ above yields
\be\label{eq:atruhn}
 \tilde e_i^1<e_i^1 , \quad i=1,\dots,n^1.
\ee
On the other-hand,~\eqref{eq:newq} yields
\begin{align*}
											 e_1^1-\tilde e_1^1&=\frac {\lambda_{n^1}^1 \tilde e_{n^1}^1}{\lambda_0^1  r^1(\tilde e)}-\frac {\lambda_{n^1}^1 e_{n^1}^1}{\lambda_0^1 r^1(e)}\\
											&=   \lambda_{n^1}^1
											\frac{   c_0^1 ( \tilde e_{n^1}^1- e_{n^1}^1 ) +\sum_{k=2}^m
													c^1_k \lambda^k_{n^k} ( \tilde e_{n^1}^1 e^k_{n^k} -   e_{n^1}^1 \tilde e^k_{n^k}  ) }{\lambda_0^1 r^1(\tilde e)r^1(  e)} .
\end{align*}
Combining this with~\eqref{eq:atruhn}
  implies that at least one of the terms in the summation on the right-hand side  must be  positive.
We may assume that
\be\label{eq:omuidt}
\tilde e_{n^1}^1 e^2_{n^2} >   e_{n^1}^1 \tilde e^2_{n^2}.
\ee
Thus, $ \frac{ e^2_{n^2} }{\tilde e^2_{n^2}  } > \frac{  e_{n^1}^1}{\tilde e_{n^1}^1}>1$,
and  we conclude that
\be\label{eq:atruhn2}
 \tilde e_i^2<e_i^2 , \quad i=1,\dots,n^2.
\ee
Now~\eqref{eq:newq} yields
\begin{align*}
											  e_1^2-\tilde e_1^2&=\frac {\lambda_{n^2}^2 \tilde e_{n^2}^2}{ \lambda_0^2 r^2(\tilde e)}-\frac {\lambda_{n^2}^2 e_{n^2}^2}{\lambda_0^2 r^2(e)}\\
											&= \lambda_{n^2}^2
											\frac{   c_0^2 ( \tilde e_{n^2}^2- e_{n^2}^2 ) +c^2_1 \lambda^1_{n^1} ( \tilde e_{n^2}^2 e^1_{n^1} -   e_{n^2}^2 \tilde e^1_{n^1}  )+\sum_{ k =2}^m
													c^2_k \lambda^k_{n^k} ( \tilde e_{n^2}^2 e^k_{n^k} -   e_{n^2}^2 \tilde e^k_{n^k}  ) }{\lambda_0^2 r^2(\tilde e)r^2(  e)} .
\end{align*}
Combining  this with~\eqref{eq:omuidt}
and~\eqref{eq:atruhn2} implies that at least one of the terms in the summation  must be  positive. We may assume that
\[
 \tilde e_{n^2}^2 e^3_{n^3} >   e_{n^2}^2 \tilde e^3_{n^3} ,
\]
i.e.
\[
									\frac{ e^3_{n^3} }{\tilde e^3_{n^3}  } > \frac{  e_{n^2}^2}{\tilde e_{n^2}^2}>1.
\]
Continuing in this manner, we find that
\be\label{eq:efty}
%%%
							\frac{ e^m_{n^m} }{\tilde e^m_{n^m}  } >\dots > \frac{  e_{n^2}^2}{\tilde e_{n^2}^2}>  \frac{  e_{n^1}^1}{\tilde e_{n^1}^1}>1	,
%%%
\ee
and that
\be\label{eq:poutt}
\tilde e^k_j < e ^k_j , \quad k=1,\dots,m,\; j=1,\dots,n_k.
\ee
Using~\eqref{eq:newq} again yields
\begin{align*}
											  e_1^m-\tilde e_1^m
												&= \lambda_{n^m}^m
											\frac{   c_0^m ( \tilde e_{n^m}^m- e_{n^m}^m )  +\sum_{ k =1}^{m-1}
													c^m_k \lambda^k_{n^k} ( \tilde e_{n^m}^m e^k_{n^k} -   e_{n^m}^m \tilde e^k_{n^k}  ) }{\lambda_0^m r^m(\tilde e)r^m(  e)} .
\end{align*}
By~\eqref{eq:efty} and \eqref{eq:poutt},
 the left-hand side here is positive and the right-hand side is negative.
This contradiction completes the proof.
\end{proof}

We can now prove  Thm.~\ref{thm:ss}. For a set~$\W$ we denote the interior of~$\W$ by~$\inter(\W)$.
%%%%%%%%%%%%%%%%
\begin{proof}[Proof of Thm.~\ref{thm:ss}]
%%%%%%%%%%%%%%%%%%%%%%%%%%%%%%%
Write the network~\eqref{eq:netisovv} and~\eqref{eq:feedbacko}
 as~$\dot x= f(x)$, with~$x\in\R^{n^1+\dots+n^m}$.
Let~$J(x):=\frac{\partial}{\partial x}f(x)$ denote the Jacobian of this dynamics.
We claim that~$J(x)$ is Metzler for all~$x\in \Theta:=[0,1]^{n^1}\times\dots\times [0,1]^{n^m}$.
 Indeed, every RFMIO is a cooperative system, so it has a Metzler Jacobian, and  the other non-zero off-diagonal  terms in~$J$ are due to the connections~\eqref{eq:feedbacko}  and are nonnegative, as all the~$c^i_k$s are nonnegative. We conclude that the network is a cooperative system. It is not difficult to show that~$\Theta$
 is an invariant set, and since it is convex and compact it admits at least one equilibrium point~$e$.
Furthermore, it can be shown that for any initial   condition~$a\in\Theta$ the  solution satisfies~$x(t,a)\in \inter(\Theta)$ for all~$t>0$.
Thus, any equilibrium point satisfies~$e\in\inter(\Theta)$.
 By Thm.~\ref{thm:unique_ex} the network
  admits a unique equilibrium point~$e$. Now Ji-Fa's Theorem~\cite{Ji-Fa01091994} implies that~$e$ is~GAS.
 %%%%%%%%%%%%%%%%%
\end{proof}

%%%%%%%%%%%%%%%%%%%%%%%%%%%%%%%%%%%%%%%%%%
\begin{proof}[\bf Proof of Prop.~\ref{prop:concavegen}]
%%%%
For the sake of simplicity, we detail the proof for the of network with~$m=2$ RFMIOs (the proof in the general case is very similar). In this case, the inputs to the two RFMIOs are
\begin{align*}
				u^1(t)&=c^1_0 y^0 + c_1^1 y^1(t)+c_2^1 y^2(t) ,\\
				u^2(t)&=c^2_0 y^0 + c_1^2 y^1(t)+c_2^2 y^2(t) ,
\end{align*}
where~$y^0\geq 0$  represents a constant source (e.g. a pool of ribosomes),
$c_k ^1,c_k^2\geq  0$ for~$k=0,1,2$, and
  $c_i^1+c_i^2\leq 1$ for~$i=0,1,2$.
Let
\be\label{eq:defwis}
w^k_i:=c^k_i y^i. \ee
  Then
\begin{align}\label{eq:lkiyyr}
				u^1&=w^1_0+ w^1_1 + w^1_2, \nonumber \\
				u^2&=w^2_0+  w^2_1 + w^2_2 ,
\end{align}
and the constraints become
\begin{align}\label{eq:ljneew}
								w^k_j \geq  0, w^1_i+w^2_i\leq y^i.
\end{align}
We   know that the network of RFMIOs converges to a steady-state, and that the steady-state
output~$y_{ss}^i$, $i=1,2$, is a strictly concave function of the rates in RFMIO~$i$.
In particular,~$y_{ss}^i=p_i(u^i_{ss})$, for some   strictly concave function~$p_i$.
Thus, the steady-state network output is
\begin{align}\label{eq:Lyddpo}
y_{ss}&=\sum_{j=1}^{m} v_{j,m+1} y^j_{ss}\nonumber \\
      &=\sum_{j=1}^{m} v_{j,m+1} p_j(   (w^j_0+ w^j_1 + w^j_2  )_{ss} )
%%%
\end{align}
This shows that~$y_{ss}$ is strictly concave in the steady-state~$w^i_k$s. At steady-state,
the constraints~\eqref{eq:ljneew} become
%%%%%%%
\begin{align}\label{eq:ljneewafter}
								(w^k_j)  \geq  0, w^1_i+w^2_i\leq  p_i(w^i_0+ w^i_1 + w^i_2)  .
\end{align}
  This constraint defines a convex set of admissible~$w^i_k$s. We
conclude that the problem of maximizing~\eqref{eq:Lyddpo}
 subject to~\eqref{eq:ljneewafter} is a convex optimization problem.
Determining  the optimal~$w^i_k$s is thus numerically tractable even for large networks.
Once these values are known, we can compute: (1)~the optimal steady-state~$u^i$s from~\eqref{eq:lkiyyr};
(2)~the optimal steady-state outputs~$y^i$ (e.g. using the spectral representation);
and finally (3)~the optimal weights~$c_i^k$
from~\eqref{eq:defwis}.
%%%%
\end{proof}

\begin{proof}[\bf Proof of Prop.~\ref{prop:concave}]
%%%%%%%%%%%%%%%%%%%%%%%%%%%%%%%%%
In a feed-forward network with~$m$ RFMIOs, the RFMIOs can be 
divided  into~$w$   disjoint sets  in the following manner.
Let $\mathcal O_1 \subset \{1,\dots,m\}$ denote the subset of RFMIOs 
 that are fed only from   constant sources.
Similarly, let~$\mathcal O_2 \subset \{1,\dots,m\}\setminus  \mathcal O_1$ denote the subset
of RFMIOs that are fed   from the outputs of 
RFMIOs in~$\mathcal O_1$ and/or from constant sources,
  but  that are not in~$\mathcal O_1$,  and so on.
	Note that for~$i\not =j$, $\mathcal O_i\cap \mathcal O_j=\emptyset$, and that~$ \mathcal O_1\cup \cdots\cup\mathcal O_w=\{1,\dots,m\}$.
It has been shown in~\cite{rfm_max} that in the RFM
the mapping~$(\LMD) \to r_{ss}$  is  strictly concave   over~$\R^{n+1}_{++}$. In particular, the mapping from $\lambda_0$ to $r_{ss}$ is strictly concave. This implies that in an RFMIO
with a positive constant input~$u(t)\equiv v$  the mapping $v \to y_{ss}$ is strictly concave.
Consider  a feed-forward network of RFMIOs.
Pick an RFMIO in~$\mathcal O_1$. The input to this RFMIO has the form~$v^1 u^1+\dots +v^p u^p$,
where the~$u^i$s are positive sources and the~$v^i$s are control weights. It follows that
the mapping~$(v^1,\dots,v^p)\to y_{ss}$   of this RFMIO is strictly concave.
We conclude that any weighted sum of outputs of RFMIOs in~$\mathcal O_1$
 is strictly
 concave in the (relevant) control weights. We can now proceed  to RFMIOs in~$\mathcal O_2$,
 and so on.
\end{proof}

 %Ref.~\cite{rfm_max} showed that the steady-state production rate in a RFM is a strictly concave function of the parameters $\lambda_i$, $i=0,\dots,n$. This implies that all the functions $R_\alpha(u)$ are concave, which means that the performance index in~\eqref{eq:p_index} in our control problem is strictly concave. Since  the constraints from~\eqref{eq:constrain} are convex, the control vector $v=(v_{\alpha\beta})$ is contained within a convex set.

%Thm,~\ref{THM:1} is a corollary of a result from \cite{mt}, which says that the output $y(\lambda)$ of a single RFMIO with parameters $\lambda=(\lambda_0,\dots,\lambda_n)$ is a strictly {\em concave} function of the vector parameter $\lambda$.  This fact implies immediately that all functions $R_\alpha(u)$ are concave. This, in turn, yields that the performance performance index in~\eqref{eq:p_index} in our control problem is concave, while the constraints from~\eqref{eq:constrain} are convex: the control vector $v=(v_{\alpha\beta})$ is contained within a convex set.

\section*{Acknowledgments}
%%%%%%%%%%%%%%%%%%%%%%%%%%%%%%%%%
YZ gratefully acknowledges the support of the Israeli Ministry of Science, Technology and Space.
 This work was partially supported by a grant from the
Ela Kodesz Institute for Cardiac Physical Sciences and Engineering.
The work of~MM is supported by a research grant from the Israeli Science Foundation~(ISF).
The work of~MM and~TT is supported by a research  grant from the US-Israel Bi-national Science foundation~(BSF).
%%%
The work of AO was conducted in the framework of the state project no. AAAA-A17-117021310387-0 and is partially supported by RFBR grant 17-08-00742.

%%%%%%%%%%%%%%%%%%%%%%%%%%%%%%%%%%%%%%%%%%%%%%%%%%

\section*{References}
%%%%%%%%
\bibliographystyle{IEEEtranS}
%\bibliographystyle{nature}  % alphabetically ordered references
%\bibliography{rfm,RFM_bibl,rfm_nets}

%%%%%%%%%%%%%%%%%%%%%%%%%%%%%%%%%%%%%

\section*{Author contributions}
I.O., Y.Z., A.O., T.T. and M.M. performed the research and wrote the paper.

\section*{Additional Information}
\subsection*{Competing financial interests.}
The authors declare no competing financial \updt{or non-financial} interests.

%\section*{Figures}
%%%%%%%%%%%%%%%%%%%%%%%%%%%%%%%%%%%%%%%%%%%%%%%%%%%%%%%

\end{document}